\documentclass[12pt]{article}
\usepackage[inline,shortlabels]{enumitem}
\usepackage{float}
\usepackage{mathtools}
\usepackage{graphicx}
\usepackage[round]{natbib}
\usepackage[margin=1in]{geometry}
\usepackage{tikz}
\usepackage[english]{babel}
\usepackage{longtable}
\usepackage{color}
\usepackage{hyperref}
\usepackage{amssymb,amsmath,amsthm}
\usepackage{multirow}
\usepackage[titletoc,title]{appendix}
\usepackage{authblk}
\usepackage{setspace}
\usepackage{dsfont}
\usepackage[OT1]{fontenc}
\usepackage{refcount}
\graphicspath{ {./figs/} }

\usepackage{nameref,zref-xr} 
\zxrsetup{toltxlabel} 

\usepackage{subfig}
\usepackage[inline]{trackchanges}
\addeditor{Ivan}
\addeditor{Nima}
\addeditor{Kara}

\newtheorem{theorem}{Theorem}
\AtEndDocument{\refstepcounter{theorem}\label{finalthm}}
\AtEndDocument{\refstepcounter{equation}\label{finaleq}}

{
  \theoremstyle{definition}
  
}
{
  \theoremstyle{definition}
  
}
{
  \theoremstyle{definition}
  
}

\DeclareMathOperator{\dr}{IF}

\newtheorem{lemma}{Lemma}
\AtEndDocument{\refstepcounter{lemma}\label{finallemma}}

\DeclareMathOperator{\expit}{expit}

\DeclareMathOperator{\logit}{logit} \DeclareMathOperator{\var}{\mathsf{Var}}

\renewcommand{\P}{\mathsf{P}}
\newcommand{\m}{\mathsf{m}}
\newcommand{\p}{\mathsf{p}}
\newcommand{\q}{\mathsf{q}}
\newcommand{\g}{\mathsf{g}}
\newcommand{\ps}{\mathsf{t}}
\newcommand{\e}{\mathsf{e}}
\newcommand{\bb}{\mathsf{b}}
\newcommand{\uu}{\mathsf{u}}
\newcommand{\h}{\mathsf{h}}
\newcommand{\vv}{\mathsf{v}}
\newcommand{\rr}{\mathsf{r}}
\newcommand{\y}{\mathsf{b}}
\newcommand{\s}{\mathsf{c}}

\newcommand{\indep}{\mbox{$\perp\!\!\!\perp$}} 

\newcommand{\dd}{\mathrm{d}}
\newcommand{\Pn}{\mathsf{P}_n}

\newcommand{\thetatmle}{\hat\theta_{\mbox{\scriptsize tmle}}}
\newcommand{\thetaos}{\hat\theta_{\mbox{\scriptsize os}}}
 
\newcommand{\one}{\mathds{1}}
 
\newcommand{\E}{\mathsf{E}}

\renewenvironment{proof}{{\it Proof }}{\qed \\}

\DeclarePairedDelimiterX{\norm}[1]{\lVert}{\rVert}{#1}

\pgfdeclarelayer{background}
\pgfsetlayers{background,main}
\usetikzlibrary{arrows,positioning}
\tikzset{
>=stealth',
punkt/.style={
rectangle,
rounded corners,
draw=black, very thick,
text width=6.5em,
minimum height=2em,
text centered},
pil/.style={
->,
thick,
shorten <=2pt,
shorten >=2pt,}
}
\newcommand{\Vertex}[2]
{\node[minimum width=0.6cm,inner sep=0.05cm] (#2) at (#1) {$\footnotesize#2$};
}
\newcommand{\Vertexr}[2]
{\node[rectangle, draw, minimum width=0.6cm,inner sep=0.05cm] (#2) at (#1) {$\footnotesize#2$};
}
\newcommand{\ArrowR}[3]%
{ \begin{pgfonlayer}{background}
\draw[->,#3] (#1) to[bend right=30] (#2);
\end{pgfonlayer}
}
\newcommand{\ArrowL}[3]%
{ \begin{pgfonlayer}{background}
\draw[->,#3] (#1) to[bend left=45] (#2);
\end{pgfonlayer}
}
\newcommand{\EdgeL}[3]%
{ \begin{pgfonlayer}{background}
\draw[dashed,#3] (#1) to[bend right=-45] (#2);
\end{pgfonlayer}
}

\newcommand{\Arrow}[3]%
{ \begin{pgfonlayer}{background}
\draw[->,#3] (#1) -- +(#2);
\end{pgfonlayer}
}
\title{Efficiently transporting causal (in)direct
  effects to new populations under intermediate confounding and with multiple mediators}

\date{\today}

\author[1]{Kara E. Rudolph}
\author[2]{Iv\'an D\'iaz}

\affil[1]{\small Department of
  Epidemiology, Mailman School of Public Health, Columbia University.}
\affil[2]{\small Division of Biostatistics, Department of Population
  Health Sciences, Weill Cornell Medicine.}

\begin{document}

\maketitle

\begin{abstract}
  The same intervention can produce different effects in different
  sites. Transport mediation estimators can estimate the extent to
  which such differences can be explained by differences in
  compositional factors and the mechanisms by which mediating or
  intermediate variables are produced; however, they are limited to
  consider a single, binary mediator. We propose novel nonparametric
  estimators of transported stochastic (in)direct effects that
  consider multiple, high-dimensional mediators and intermediate
  variables. They are multiply robust, efficient, asymptotically
  normal, and can incorporate data-adaptive estimation of nuisance
  parameters. They can be applied to understand differences in
  treatment effects across sites and/or to predict treatment effects in a target site based on outcome data in source sites.
\end{abstract}

\section{Introduction}
The same intervention can produce different effects in different
populations
\citep[e.g.,][]{orr2003moving,miller2015projected,arnold2018implications}. Different
effects could arise from differences in: i) the distribution of
compositional factors that modify aspects of the intervention's
effectiveness (e.g., gender, age), ii) probability take-up or degree
of adherence to the intervention, iii) the mechanism by which
important mediating or intermediate variables are produced, and/or iv)
the mechanism by which the outcome is produced in different
populations, including differences population- or site-level
contextual variables that are predictive of the outcome
\citep{pearl2018transportability}.

Transportability has been defined by \citet{pearl2018transportability}
as the ``license to transfer causal information learned in experimental
studies to a different environment." Previously, we proposed using the
transport graphs of \citet{pearl2018transportability} coupled with a
transport estimator that predicts effects ``transported" to a target
population as a tool for quantitatively examining the extent to which
differences in effect estimates between sites could be explained by
factors i-iii above \citep{rudolph2017robust,
  rudolph2019transporting}.  In this previous work, we developed an
efficient and robust semi-parametric estimator of transported
stochastic \citep[also called randomized interventional,
see][]{didelez2006direct,vanderweele2014effect} direct and indirect
(what we refer to as (in)direct) effects in a target population
\citep{rudolph2019transporting}. Although this previous estimator
accounted for the presence of intermediate variables (those affected
by treatment/exposure that could affect downstream mediator and
outcome variables), it was limited in that it could only consider
binary versions of a treatment/exposure variable, intermediate
variable, and mediator variable, and assumed that the distribution of
the mediator was known \citep{rudolph2019transporting}. To our
knowledge, it is currently the only available estimator for
transporting (in)direct effects.

However, many research questions involve continuous and/or multiple
mediator variables. Thus, we address this methodologic gap by
proposing novel nonparametric estimators of transported stochastic
(in)direct effects that allow for multiple, possibly high-dimensional
mediators without constraints on their distributions or the
intermediate variables.

To motivate this work, we consider a research question from the Moving
to Opportunity study (MTO), a multi-site randomized controlled trial
conducted by the US Department of Housing and Urban Development, where
families living in high-rise public housing were randomized to receive
a Section 8 housing voucher that they could use to move to a rental on
the private market \citep{sanbonmatsu2011moving}. Families were
followed up at two subsequent time points with the final time point
occurring 10-15 years after randomization. In this study,
some 
unintended harmful effects on children's mental health, substance use,
and risk behavior outcomes were documented
\citep{sanbonmatsu2011moving}, and these overall effects were
partially mediated by aspects of the peer and school environments
\citep{rudolph2018mediation}. However, these unintended harmful
effects and their indirect effect components were not universal across
sites \citep{rudolph2018composition, rudolph2020using}. Our goal is to
use the transportability framework and the novel estimators we propose
to shed light on possible reasons why the intervention had harmful
effects in some sites, particularly in Chicago, but not in others. For
example, if we take Chicago as the site we would like to transport to,
then we borrow information from the remaining sites to learn the
outcome model, we can predict the effect for Chicago, standardizing based on the covariates, intermediate and mediating variables. 

The utility of borrowing or transporting information across sites
applies more broadly than the above MTO example. It applies to
questions that seek to: 1) understand differences in treatment,
policy, or intervention effects across sites in multi-site trials or
cohort studies, or  to 2) predict treatment effects in a target site based on outcome data in source sites.

This paper is organized as follows. In Section 2, we introduce
notation, define the structural causal models we consider, and define
and identify the transported stochastic (in)direct effects. In Section
3, we describe the efficient influence function (EIF), including a
re-parameterization that allows for estimation with multiple and/or
continuously distributed mediators, and derive the robustness
properties of the EIF. In Section 4, we describe two efficient
estimators for the transported stochastic (in)direct effects, based on
the EIF derived in Section 3: an estimator that solves the EIF in one
step and a targeted minimum loss-based estimator (TMLE). In Section 5,
we present results from a simulation study in which we demonstrate the
consistency, efficiency and robustness of the two estimators across
various scenarios. In Section 6, we apply the two estimators to
estimate the transported indirect effects of housing voucher receipt
on subsequent behavioral problems as adolescents among girls in
Chicago, operating through aspects of the school environment,
borrowing information from the other MTO sites. Section 7 concludes
the manuscript.

\section{Notation and definition of (in)direct effects}
Let $O = (S, W, A, Z, M, SY)$ represent the observed data, where $S$
denotes a binary variable indicating membership in the source
population ($S=1$) or target population ($S=0$), $W$ denotes a vector
of observed pre-treatment covariates, $A$ denotes a categorical
treatment variable, $Z$ denotes an intermediate variable (a
mediator-outcome confounder affected by treatment), $M$ denotes a
multivariate mediator, and $Y$ denotes a continuous or binary
outcome. Let $O_1, \ldots, O_n$ denote a sample of $n$
i.i.d.~observations of $O$. Note that the outcome is only observed for
the source population/sites, $S=1$, but we are interested in
estimating effects for the target population/site, $S=0$. We formalize
the definition of our counterfactual variables using the following
non-parametric structural equation model \citep[NPSEM, ][]{Pearl2009}
though equivalent methods may be developed by taking the
counterfactual variables as primitives \citep{Rubin74}. Assume the
data-generating process satisfies:
\begin{multline}\label{eq:npsem}
  S=f_S(U_S);\ W = f_W(S,U_W);\ A = f_A(S,W, U_A);\ Z=f_Z(S,W, A, U_Z);\\
  M = f_M(S,W, A, Z, U_M);\ Y = f_Y(W, A, Z, M, U_Y).
\end{multline}
Here, $U=(U_S,U_W,U_A,U_Z,U_M,U_Y)$ is a vector of exogenous factors, and
the functions $f$ are assumed deterministic but unknown. We use $\P$
to denote the distribution of $O$. 
We let $\P$ be an element of the nonparametric statistical model
defined as all continuous densities on $O$ with respect to some
dominating measure $\nu$. Let $\p$ denote the corresponding
probability density function. We denote random variables with capital
letters and realizations of those variables with lowercase
letters. 
 We define $\P f = \int f(o)\dd \P(o)$ for a given function $f(o)$.

 We use the following additional definitions.  The function $\s(a,z,m,w)$ denotes
 $\P(S=1\mid A=a,Z=z,M=m,W=w)$, $\g(a \mid w)$ denotes
 $\P(A=a\mid W=w,S=0)$, $\e(a \mid m, w)$ denotes
 $\P(A=a\mid M=M,W=w,S=0)$, $\q(z \mid a,w)$ denotes the density of
 $Z$ conditional on $(A,W,S)=(a,w,0)$, $\rr(z \mid a,m,w)$ denotes the
 density of $Z$ conditional on $(A,M,W,S)=(a,m,w,0)$, $\y(a,z,m,w)$
 denotes $\E(Y \mid A = a,Z = z,M = m, W = w, S=1)$, and $\ps$ denotes
 $\P(S=0).$
 For a random variable $X$, we let $X_a$ denote the counterfactual
 outcome observed in a hypothetical world in which $\P(A=a)=1$. For
 example, we have $Z_a = f_Z(S,W,a,U_Z)$, $M_a=f_M(S,W,a,Z_a,U_M)$,
 and $Y_a=f_Y(W,a,Z_a,M_a,U_Y)$. Likewise, we let
 $Y_{a,m} =f_Y(W,a,Z_a,m,U_Y)$ denote the value of the outcome in a
 hypothetical world where $\P(A=a,M=m)=1$.

\subsection{Transported stochastic (in)direct effects}
\label{sec:estimand}
We define the total effect of $A$ on $Y$ in the
target population $S=0$ in terms of a contrast between two user-given
values $a', a^{\star} \in \mathcal A$ among those for whom $S=0$. The total effect can be decomposed into the natural direct
and indirect effect. However, natural direct and indirect effects are not generally identified in the presence of a
mediator-outcome confounder affected by treatment ($Z$, using our notation above)
\citep{avin2005identifiability,tchetgen2014identification}. Direct and indirect effects may be alternatively defined considering a stochastic intervention on the mediator \citep{petersen2006estimation,van2008direct,zheng2012targeted,
  vanderweele2014effect,rudolph2017robust}. 
Let $G_a$ denote a random draw from the conditional distribution of
$M_a$ conditional on $(S,W)$. The stochastic indirect effect (also
called randomized interventional indirect effect) among those for whom
$S=0$ can be written:
$\E(Y_{a', G_{a'}} - Y_{a', G_{a^{\star}}}\mid
S=0)$. 
This is the effect of $A$ on $Y$ that operates through $M$
. The stochastic direct effect among those for whom $S=0$ can be
similarly written:
$\E(Y_{a', G_{a^{\star}}} - Y_{a^{\star}, G_{a^{\star}}}\mid S=0)$,
and is the effect of $A$ on $Y$ that does not operate through $M$.
We focus on identification and estimation
of $\theta = \E(Y_{a', G_{a^{\star}}}\mid S=0)$. Contrasts of $\theta$
under the values of $a'$ and $a^*$ given in the above definitions
correspond to the transported stochastic (in)direct
effects. 
Under the assumptions

\begin{enumerate}[label=(\roman*)]
\item $Y_{a,m}\indep A\mid W$,\label{ass:ncay}
\item $M_{a}\indep A\mid W$,\label{ass:ncam}
\item $Y_{a,m}\indep M\mid (A,W,Z)$,\label{ass:ncmy}
\item
  $\E(Y\mid A = a,Z = z,M = m, W = w, S=1)=\E(Y\mid A = a,Z = z,M = m,
  W = w, S=0)$,\label{ass:exch}
   and 
 \item there is a non-zero probability of assigning any level $A$ for
   all $S,W$; a non-zero probability of assigning any level $A$ for
   all $S=0,W,M$; a non-zero probability of assigning any level of $Z$
   for all combinations of $S=0, W, A, M$; and a non-zero probability
   that $S=0$ for all combinations of $W, A, Z, M$ (referred to as the
   positivity assumption),\label{ass:pos}
\end{enumerate}
$\theta$ is identified and is equal to
\begin{equation}
\theta = \int \y(a',z,m,w)\q(z\mid
  a',w)\p(m\mid a^{\star},w)\p(w\mid S=0)\dd \nu(w,z,m).\label{eq:thetadef}
\end{equation} Assumption \ref{ass:ncay}
states that, conditional on $W$, there is no unmeasured confounding of
the relation between $A$ and $Y$; assumption \ref{ass:ncam} states
that conditional on $W$ there is no unmeasured confounding of the
relation between $A$ and $M$; \ref{ass:ncmy} states that
conditional on $(A,W,Z)$ there is no unmeasured confounding of the
relation between $M$ and $Y$; \ref{ass:exch} states that there is a
common outcome model across populations/ sites. It is this last
assumption \ref{ass:exch} that allows us to transport or borrow
information on the outcome model from other sites. If an alternative
data source is available where $Y$ is observed
among those for whom $S=0$
, then the null hypothesis of equivalence between $S=0$ and
$S=1$ can be tested nonparametrically \citep{luedtke2019omnibus}.
%

\section{Efficient influence function for $\theta$}

The \textit{efficient influence function} (EIF) characterizes the
asymptotic behavior of all regular and efficient estimators
\citep{Bickel97, van2002part}. In addition to being locally efficient,
estimators constructed using the EIF have advantages of multiple
robustness, which means that some components of the data distribution
(i.e., nuisance parameters) can be inconsistently estimated while the
estimator remains consistent. The multiple robustness property also
allows the use data-adaptive machine learning algorithms in estimating
nuisance parameters while retaining the ability to compute correct
standard errors and confidence intervals. This is due to fact that the
asymptotic analysis of the estimators yield second-order bias terms in
differences of the nuisance parameters, and therefore allow slow
convergence rates (e.g., $n^{-1/4}$) for estimating these nuisance
parameters.

\begin{theorem}[Efficient influence function]\label{theo:eif}
  For fixed $a'$, $a^{\star}$ define
\begin{equation}
  \begin{split}
    \h(a, z, m, w) & =  \frac{\p(m\mid a^{\star}, w)}{\p(m\mid a, z,
      w)}\\
    \uu(z,a,w) &=\int_{\mathcal
      M}\y(a,z,m,w)\p(m\mid a^{\star},w)\dd\nu(m)\\
    \vv(a,w)&=\int_{\mathcal
      M\times Z}\y(a',z,m,w)\q(z\mid
    a',w)\p(m\mid a,w)\dd\nu(m,z).
  \end{split}\label{eq:defhuv}
\end{equation}
The efficient influence function for $\theta$ in the
nonparametric model $M$ is equal to
\begin{align}
\label{eq:eif}
\begin{split}
D_{\P,\theta}(o) = & D_{\P,Y}(o) + D_{\P,Z}(o) + D_{\P,M}(o) + D_{\P,W}(o), \text{ where } \\
  D_{\P,Y}(o) = &\frac{\one\{s=1,a=a'\}}{\ps\times \g(a'\mid w)}\frac{1-\s(a,z,m,w)}{\s(a,z,m,w)}\h(a',z,m,w)\{y - \y(a',z,m,w)\}
  \\
  D_{\P,Z}(o) =  & \frac{\one\{s=0,a=a'\}}{\ps\times \g(a'\mid w)}\left\{\uu(z,a',w)-\int_{\mathcal
            Z}\uu(z,a',w)\q(z\mid a',w)\dd\nu(z)\right\}
            \\
  D_{\P,M}(o) =  &  \frac{\one\{s=0,a=a^{\star}\}}{\ps\times \g(a^{\star}\mid w)}\left\{\int_{\mathcal
            Z}\y(a',z,m,w)\q(z\mid a',w)\dd\nu(z)-\vv(a^{\star},
            w)\right\}
            \\
            D_{\P,\theta,W}(o) =  & \frac{\one\{s=0\}}{\ps}\left\{\vv(a^{\star},w) - \theta\right\}
  \end{split}
\end{align}
\end{theorem}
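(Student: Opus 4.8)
The plan is to exhibit $D_{\P,\theta}$ as the canonical gradient of $\theta$ at $\P$. I would take an arbitrary regular one-dimensional submodel $\{\P_\epsilon\}$ through $\P$ at $\epsilon=0$ with score $g\in L^2_0(\P)$, verify that $\frac{\dd}{\dd\epsilon}\theta(\P_\epsilon)\big|_{\epsilon=0}=\P\{D_{\P,\theta}\,g\}$, and conclude: since $\M$ is nonparametric, $g$ ranges over all of $L^2_0(\P)$ and the gradient is unique, so $D_{\P,\theta}$ is the efficient influence function. Two algebraic identities would organize the calculation. First, the nested form of \eqref{eq:thetadef}: with $\uu,\vv$ as in \eqref{eq:defhuv}, $\vv(a^\star,w)=\int\uu(z,a',w)\,\q(z\mid a',w)\,\dd\nu(z)$ and $\theta=\ps^{-1}\,\P\{\one\{S=0\}\,\vv(a^\star,W)\}$. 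Second, the inverse-probability form
\[
\theta=\P\!\left\{\frac{\one\{S=1,A=a'\}}{\ps\,\g(a'\mid W)}\,\frac{1-\s(a',Z,M,W)}{\s(a',Z,M,W)}\,\h(a',Z,M,W)\,Y\right\},
\]
obtained by rewriting the target mixing density $\q(z\mid a',w)\,\p(m\mid a^\star,w)\,\p(w\mid S=0)$ as a likelihood ratio against the sampling density of $(W,A,Z,M)$ on the event $\{S=1,A=a'\}$ and simplifying by Bayes' rule.

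Next I would split $g$ along the factorization $\p(o)=\p(s)\,\p(w\mid s)\,\p(a\mid s,w)\,\p(z\mid s,a,w)\,\p(m\mid s,a,z,w)\,\p(y\mid s,a,z,m,w)$ into orthogonal components $g=g_S+g_W+g_A+g_Z+g_M+g_Y$ and compute the contribution of each to the pathwise derivative by perturbing one factor at a time. Because \eqref{eq:thetadef} involves neither $\P(S)$ nor the propensity $\p(a\mid s,w)$, the $g_S$ and $g_A$ directions contribute nothing — this explains why the EIF has only four summands. Perturbing $\p(w\mid S=0)$ and re-centering by $\theta$ yields $D_{\P,W}$. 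Perturbing $\p(y\mid s,a,z,m,w)$ at $(s,a)=(1,a')$ in the inverse-probability form — the weight there not depending on the conditional law of $Y$ — and replacing $Y$ by the residual $Y-\y(a',Z,M,W)$ yields $D_{\P,Y}$.

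The subtle part is the $Z$ and $M$ directions, because the intermediate-variable law $\p(z\mid s,a,w)$ enters \eqref{eq:thetadef} twice: directly through $\q(z\mid a',w)$ at $A=a'$, and at $A=a^\star$ through $\p(m\mid a^\star,w)=\int\p(m\mid a^\star,z,w)\,\q(z\mid a^\star,w)\,\dd\nu(z)$, which in turn also receives a contribution from $\p(m\mid s,a,z,w)$. I would compute these three pieces; the raw Riesz representer of each is then re-centered to lie in the relevant tangent subspace — subtracting the conditional mean given $(A,W,S)$ for the two $Z$-pieces and given $(Z,A,W,S)$ for the $M$-piece — which does not change its pairing with the corresponding score component. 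The $A=a'$ piece is already $D_{\P,Z}$; the two $A=a^\star$ pieces both feature the auxiliary quantity $\int\phi(m,w)\,\p(m\mid a^\star,z,w)\,\dd\nu(m)$ with $\phi(m,w)=\int\y(a',z,m,w)\,\q(z\mid a',w)\,\dd\nu(z)$, and when these two pieces are added this quantity cancels, leaving exactly $D_{\P,M}$. Summing the four surviving terms and invoking uniqueness of the gradient in $\M$ completes the argument.

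The main obstacle I anticipate is the inverse-probability re-expression used for the $Y$ direction: converting the $S=1$, $A=a'$ sampling density of $(W,A,Z,M)$ into the target mixing density is a Bayes-rule computation in which several factors must cancel to leave precisely the transport ratio $(1-\s)/\s$, the \emph{target} propensity $\g(a'\mid w)$ rather than its $S=1$ analogue, and the re-parameterizing mediator ratio $\h$; it is easy to mistrack which densities are conditioned on $S=0$ versus $S=1$. The remaining difficulty — the re-centering and the cancellation in the $Z$--$M$ block — is routine once the factor-by-factor decomposition is set up correctly.
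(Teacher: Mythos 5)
Your proposal is correct, but it takes a genuinely different route from the paper's own proof. The paper does not verify pathwise differentiability via scores at all: it reduces to a discrete dominating measure, writes $\theta$ as a ratio-of-empirical-means functional of the NPMLE, invokes the fact that in a nonparametric model the influence curve of the NPMLE is the unique gradient and hence the EIF, and then obtains $D_{\P,\theta}$ by the delta method, $\dr_\P(O)=\sum_{f\in\mathcal{F}}\frac{\dd\hat\Theta^{\star}(\P)}{\dd\P f}\{f(O)-\P f\}$, with the (lengthy) differentiation left unexecuted. You instead run the classical semiparametric calculation: factor the likelihood, decompose an arbitrary score as $g_S+g_W+g_A+g_Z+g_M+g_Y$, and identify the Riesz representer of each factor's contribution, using uniqueness of the gradient in the nonparametric model to conclude. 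I checked the two places where your sketch could go wrong and both are sound: your inverse-probability re-expression of $\theta$ does simplify, via $\p(w,a',z,m,S=1)=\ps\,\p(w\mid S{=}0)\,\g(a'\mid w)\,\q(z\mid a',w)\,\p(m\mid a',z,w)\,\s/(1-\s)$, to exactly the $D_{\P,Y}$ weight $\{\ps\,\g(a'\mid w)\}^{-1}\h\,(1-\s)/\s$; and in the $Z$--$M$ block the two $a^{\star}$ representers are $\{\phi(m,w)-\psi(z,w)\}$ and $\{\psi(z,w)-\vv(a^{\star},w)\}$ with $\psi(z,w)=\int\phi(m,w)\p(m\mid a^{\star},z,w)\dd\nu(m)$, so the cancellation you predict does occur and their sum is $D_{\P,M}$, while the $a'$ piece is $D_{\P,Z}$ and the $g_S$, $g_A$ directions pair to zero. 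As for what each approach buys: the paper's NPMLE/delta-method device sidesteps tangent-space bookkeeping entirely and makes the derivation a mechanical (if tedious) differentiation, at the cost of the discreteness reduction and an appeal that the formula extends to general $\nu$; your argument works directly for a general dominating measure, explains structurally why the EIF has exactly the four components $D_{\P,Y},D_{\P,Z},D_{\P,M},D_{\P,W}$ (one per likelihood factor on which $\theta$ depends), but demands the careful $S{=}0$ versus $S{=}1$ conditioning you yourself flag, plus the (routine) orthogonality checks that make the component-by-component pairing legitimate. Either route, fully written out, proves the theorem.
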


This theorem makes two important contributions that advance the
previous work deriving the EIF for a similar $\theta$, but one that
was limited in that
it 
i) assumed that the distribution of $M$ conditional on $(A,W,S)$ was
known, and ii) could only consider a single binary $M$
\citep{rudolph2019transporting}. First, the EIF we derive does not
assume that that the distribution of $M$ conditional on $(A,W,S)$ is
known, reflected in the $D_{\P,M}(o)$ component of the EIF in Equation
\ref{eq:eif},
above. 
Second, we can overcome the challenge of estimating multivariate or
continuous densities on the mediator, $M$, and intermediate variable,
$Z$, as well as integrals with respect to these densities, if either
$M$ or $Z$ is low-dimensional (though it can be multivariate) by using
an alternative parameterization of the densities that allows
regression methods to be used in estimating the relevant
quantities. In the remainder of this work, we assume $Z$ is
low-dimensional (e.g., binary, as in our MTO illustrative
application), though similar parameterizations may be achieved if $M$
is low-dimensional.

The EIF given in Theorem~\ref{theo:eif} may be represented in terms of
the expressions given in Lemma~\ref{lemma:aeif} below, which does not
depend on conditional densities or integrals on the mediating variables.

\begin{lemma}[Alternative representation of the EIF for univariate $Z$
  and multivariate $M$]\label{lemma:aeif}
  The functions $\h$, $\uu$, and $\vv$ may be parameterized:
  \begin{align}
    \h(a, z, m, w)&=\frac{\g(a\mid w)}{\g(a^{\star}\mid w)}
                    \frac{\q(z\mid a,w)}{\rr(z\mid a,m,w)}
                    \frac{\e(a^{\star}\mid m, w)}{\e(a\mid m,w)}\label{eq:param}\\
    \uu(z,a,w) &= \E\left\{\y(A,Z,M,W)\h(A,Z,M,W),\bigg|\,
                 Z=z,A=a,W=w,S=0\right\},\label{eq:phi}\\
    \vv(a,w) &= \E\left\{\int_{\mathcal
               Z}\y(a',z,M,W)\q(z\mid a',W)\dd\nu(z)\,\bigg|\, A=a,W=w,S=0\right\}.\label{eq:lambda}
  \end{align}
\end{lemma}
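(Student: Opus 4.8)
\emph{Proof proposal.} All three claims are algebraic identities; the only tools needed are Bayes' theorem applied within the target population $S=0$ and Fubini's theorem. Throughout, every conditional density of $M$ or $Z$ written below (including those appearing in the definition of $\h$ in \eqref{eq:defhuv}) is the one induced by $\P$ within the stratum $S=0$, consistent with the identifying formula \eqref{eq:thetadef}; the positivity assumption \ref{ass:pos} ensures that each such density and each ratio below is well-defined.

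\emph{Step 1: the representation \eqref{eq:param} of $\h$.} Write $\p(m\mid w)$ for the density of $M$ given $(W,S)=(w,0)$. Applying Bayes' rule conditionally on $(W,S)=(w,0)$ gives $\p(m\mid a^{\star},w)=\e(a^{\star}\mid m,w)\,\p(m\mid w)/\g(a^{\star}\mid w)$. For the denominator of $\h$ I would invert in two stages: conditioning on $(A,W,S)=(a,w,0)$, Bayes' rule gives $\p(m\mid a,z,w)=\rr(z\mid a,m,w)\,\p(m\mid a,w)/\q(z\mid a,w)$, and then conditioning on $(W,S)=(w,0)$ gives $\p(m\mid a,w)=\e(a\mid m,w)\,\p(m\mid w)/\g(a\mid w)$. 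Forming $\h(a,z,m,w)=\p(m\mid a^{\star},w)/\p(m\mid a,z,w)$, the common factor $\p(m\mid w)$ cancels and rearranging the surviving terms produces exactly \eqref{eq:param}.

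\emph{Step 2: the representations \eqref{eq:phi} and \eqref{eq:lambda}.} For \eqref{eq:phi}, expand the conditional expectation on its right-hand side as $\int_{\mathcal M}\y(a,z,m,w)\,\h(a,z,m,w)\,\p(m\mid a,z,w)\,\dd\nu(m)$; substituting the definition $\h(a,z,m,w)=\p(m\mid a^{\star},w)/\p(m\mid a,z,w)$ from \eqref{eq:defhuv} cancels the factor $\p(m\mid a,z,w)$ and leaves $\int_{\mathcal M}\y(a,z,m,w)\,\p(m\mid a^{\star},w)\,\dd\nu(m)$, which is $\uu(z,a,w)$ by \eqref{eq:defhuv}. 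For \eqref{eq:lambda}, expand the conditional expectation as $\int_{\mathcal M}\bigl\{\int_{\mathcal Z}\y(a',z,m,w)\,\q(z\mid a',w)\,\dd\nu(z)\bigr\}\,\p(m\mid a,w)\,\dd\nu(m)$ and then apply Fubini's theorem to merge the iterated integral into $\int_{\mathcal M\times Z}\y(a',z,m,w)\,\q(z\mid a',w)\,\p(m\mid a,w)\,\dd\nu(m,z)=\vv(a,w)$, matching \eqref{eq:defhuv}.

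\emph{Main obstacle.} There is no substantive obstacle: the argument is entirely bookkeeping, and the places that need attention are (i) being explicit that every conditional law carries the implicit event $S=0$, so that $\h$ in \eqref{eq:defhuv} is referred to the same target-population laws as the quantities $\g,\e,\q,\rr$ used in \eqref{eq:param}; (ii) performing the Bayes inversion of $\p(m\mid a,z,w)$ in the right order ($Z$ first, then $A$), since conditioning on $Z$ alters the relevant $A$-marginal of $M$; and (iii) invoking positivity (\ref{ass:pos}) to justify the divisions, and boundedness/integrability of $\y$ (e.g., $Y$ bounded or with finite conditional mean) to justify the exchange of integrals in \eqref{eq:lambda}.
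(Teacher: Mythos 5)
Your proposal is correct and follows essentially the same route as the paper: the key step is exactly the Bayes-rule identity $\p(m\mid a^{\star},w)=\p(m\mid a,z,w)\,\frac{\g(a\mid w)}{\g(a^{\star}\mid w)}\frac{\q(z\mid a,w)}{\rr(z\mid a,m,w)}\frac{\e(a^{\star}\mid m,w)}{\e(a\mid m,w)}$ (all within $S=0$), which is the identity the paper's proof substitutes, and your Steps 2 merely spell out the cancellation and Fubini bookkeeping for \eqref{eq:phi} and \eqref{eq:lambda} that the paper leaves implicit.
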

In the remainder of the paper, we denote
$\eta=(\s,\g,\e,\q,\rr,\y,\uu,\vv)$ and $D_{\P,\theta}(o) = D_{\eta,\theta}(o)$. We let
$\hat{\eta}$ denote an estimator of $\eta$, and $\eta_1$ denotes the
probability limit of $\hat\eta$, which may be different from the true
value. 
We derive the robustness properties of $D_{\eta,\theta}(O)$ in the
Supplementary Materials; they are given below in
Lemma~\ref{lemma:robust}. 
The behavior of the term  $\P D_{\eta_1,
  \theta}$ determines the
robustness properties of the EIF as an estimating equation. Theorem
1 
in the Supplementary Materials, together with the
Cauchy-Schwarz inequality shows that $\P D_{\eta_1,\theta}$ yields a term of
the order of:
\begin{align*}
  R(\eta_1,\eta) &= \lVert \vv_1 - \vv \lVert \lVert \g_1 -
                   \g\lVert  \\
                 &+\lVert \uu_1 - \uu \lVert \lVert \q_1 -
                   \q\lVert\\
                 &+\lVert \y_1 - \y \lVert \lVert \q_1 -
                   \q\lVert\\
                 &+\lVert \y_1 - \y \lVert \{\lVert \s_1 - \s
                   \lVert + \lVert \q_1 - \q \lVert + \lVert \rr_1
                   - \rr \lVert + \lVert \e_1 - \e \lVert\}
\end{align*}
 such that consistent estimation of $\theta$ is possible
under consistent estimation of certain configurations of the
parameters in $\eta$.  The following lemma is a direct consequence.
\begin{lemma}[Multiple robustness of $D_{\eta,\theta}(O)$]\label{lemma:robust}
  Let $\eta_1=(\s_1,\g_1,\e_1,\q_1,\rr_1,\y_1,\uu_1,\vv_1)$ be such that one
  of the following conditions hold:
  \begin{enumerate}[i)]
  \item $\vv_1=\vv$ and either $(\s_1,\q_1,\e_1,\rr_1)=(\s,\q,\e,\rr)$
    or $(\y_1,\q_1)=(\y,\q)$ or $(\y_1,\uu_1)=(\y,\uu)$, or
  \item $\g_1=\g$ and either $(\s_1,\q_1,\e_1,\rr_1)=(\s,\q,\e,\rr)$
    or $(\y_1,\q_1)=(\y,\q)$ or $(\y_1,\uu_1)=(\y,\uu)$.
  \end{enumerate}
  Then $\P D_{\eta_1,\theta} = 0$ with $D_{\eta,\theta}$ defined as in
  Theorem~\ref{lemma:aeif}.
\end{lemma}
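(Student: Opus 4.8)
The plan is to deduce the statement from the exact second-order expansion of $\P D_{\eta_1,\theta}$ established in Theorem~1 of the Supplementary Materials, rather than computing each of the six cases by hand. That expansion writes $\P D_{\eta_1,\theta}$ as a finite sum of integrals, each \emph{bilinear} in the nuisance errors: every summand carries at least two factors of the form $\eta_{1,k}-\eta_k$ for distinct components $k\in\{\s,\g,\e,\q,\rr,\y,\uu,\vv\}$, times weights that are uniformly bounded under the positivity assumption~\ref{ass:pos} and the maintained boundedness of the outcome and nuisance functions. Applying the Cauchy--Schwarz inequality to each summand gives $|\P D_{\eta_1,\theta}|\le C\,R(\eta_1,\eta)$ for a finite constant $C$, with $R(\eta_1,\eta)$ the four-group expression displayed above. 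Because $R(\eta_1,\eta)$ is a sum of nonnegative products of $L^2$ norms, it equals zero exactly when, in each product, at least one factor vanishes; so it suffices to check that each configuration in i) and ii) kills every product in $R(\eta_1,\eta)$.

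The case check is then immediate. The first group $\lVert\vv_1-\vv\rVert\,\lVert\g_1-\g\rVert$ is annihilated by $\vv_1=\vv$ in case i) and by $\g_1=\g$ in case ii). For the other three groups, one inspects the common sub-conditions. If $(\s_1,\q_1,\e_1,\rr_1)=(\s,\q,\e,\rr)$, then $\q_1=\q$ kills the two groups $\lVert\uu_1-\uu\rVert\,\lVert\q_1-\q\rVert$ and $\lVert\y_1-\y\rVert\,\lVert\q_1-\q\rVert$, while $(\s_1,\q_1,\rr_1,\e_1)=(\s,\q,\rr,\e)$ makes the entire bracket in the fourth group vanish. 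If $(\y_1,\q_1)=(\y,\q)$, then $\q_1=\q$ again removes the second and third groups and $\y_1=\y$ removes the fourth. If $(\y_1,\uu_1)=(\y,\uu)$, then $\uu_1=\uu$ removes the second group and $\y_1=\y$ the third and fourth. In each of the three sub-cases, combined with whichever of $\vv_1=\vv$ or $\g_1=\g$ is assumed, every product in $R(\eta_1,\eta)$ carries a vanishing factor, so $R(\eta_1,\eta)=0$ and therefore $\P D_{\eta_1,\theta}=0$.

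The only genuinely substantive step --- the one I would treat as the main obstacle --- is the supplementary derivation itself: showing that $\P D_{\eta_1,\theta}$ has \emph{no} first-order, single-difference terms and hence is bilinear in $\eta_1-\eta$. I would obtain this by expanding $\P D_{\eta_1,\theta}=\P D_{\eta_1,Y}+\P D_{\eta_1,Z}+\P D_{\eta_1,M}+\P D_{\eta_1,W}$, applying iterated expectations layer by layer (over $Y$, then $M$, then $Z$, then $W$), and using the change-of-measure identities encoded in the parameterization of $\h$, $\uu$, $\vv$ in Lemma~\ref{lemma:aeif} --- in particular that the inverse-probability weights in $D_{\P,Y}$, $D_{\P,Z}$ and $D_{\P,M}$ re-weight the observed source-site law (from $S=1$ back to $S=0$, from the treatment actually received to $a'$ or $a^{\star}$, and from the mediator law $\p(m\mid a',z,w)$ to $\p(m\mid a^{\star},w)$) so that the linear pieces of the four components telescope. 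The computation is routine but bookkeeping-heavy precisely because the mediator and intermediate layers must be peeled off in the right order; keeping track of which product survives when a given component of $\eta$ is misspecified is exactly what generates the four groups of $R(\eta_1,\eta)$, and hence the sufficient configurations i) and ii). One could instead verify each of the six cases by a direct and shorter computation of $\P D_{\eta_1,\theta}$, but the expansion route reuses work needed anyway for the asymptotic analysis of the estimators.
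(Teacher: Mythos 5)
Your proposal is correct and follows essentially the same route as the paper: the lemma is obtained as a direct consequence of the exact expansion of $\P D_{\eta_1,\theta}$ in the supplementary multiple-robustness theorem (combined with Cauchy--Schwarz to express it in terms of $R(\eta_1,\eta)$), followed by precisely the case check you give over the four product terms. The only cosmetic remark is that one can skip the Cauchy--Schwarz/boundedness step for this lemma, since under each configuration a factor in every summand of the exact expansion vanishes identically, making each integral exactly zero.
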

 We note that the cases $(\y_1,\vv_1,\uu_1)=(\y,\vv,\uu)$ and
$(\y_1,\g_1,\uu_1)=(\y,\g,\uu)$ may be uninteresting if the
re-parametrization in Lemma~\ref{lemma:aeif} is used to estimate the
EIF, because in that case, consistent estimation
of $\uu$ and $\vv$ will generally require consistent estimation of
$(\y,\s,\q,\rr,\e)$ in addition to the outer conditional expectations in
Equations (\ref{eq:phi}) and (\ref{eq:lambda}). 

\section{Estimators}
\label{sec:est}
We describe two efficient, robust estimators of $\theta$. In
subsection \ref{sec:os}, we propose an estimator that solves the EIF
estimating equation in one step \citep{pfanzagl1982contributions}
(which we refer to as a one-step estimator), and in subsection
\ref{sec:tmle}, we propose a targeted minimum loss-based estimator
\citep[TMLE, ][]{vdl2006targeted}, which is a substitution estimator
that also solves the EIF estimating equation, but does it through
iterative de-biasing targeted updates to nuisance parameters. We
provide the R code to implement the proposed estimators, freely
available at \url{https://github.com/kararudolph/transport}.

Let $\thetaos$ and $\thetatmle$ denote the estimators defined below in Sections~\ref{sec:os} and \ref{sec:tmle}. Per the theorem below, the two estimators are asymptotically normal and efficient. 
\begin{theorem}[Asymptotic normality and efficiency]\label{theo:assymp}
  Assume
  \begin{enumerate}[label=(\roman*)]
  \item Positivity, described as identification assumption
    \ref{ass:pos} in Section
    \ref{sec:estimand}, \label{ass:positivity} and
  \item The class of functions
    $\{D_{\eta,\theta}:|\theta-\theta_0| <\delta, ||\eta-\eta_1||<\delta\}$ is
    Donsker for some $\delta >0$ and such that
    $P_0(D_{\eta,\theta} - D_{\eta_1,\theta_0})^2\to 0$ as
    $(\eta,\theta)\to (\eta_1,\theta_0)$, and
  \item The second-order term $R(\hat\eta,\eta)$ is
    $o_P(n^{-1/2})$. \label{ass:remainder}
  \end{enumerate}
  Then, $\sqrt{n}(\thetaos-\theta) \rightarrow N(0,\sigma^2)$, and $\sqrt{n}(\thetatmle-\theta) \rightarrow N(0,\sigma^2)$ where
  $\sigma^2=\var(D_{\eta}(O))$ is the non-parametric efficiency bound.
\end{theorem}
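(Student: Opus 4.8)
The plan is to establish the standard asymptotic expansion for both estimators using the fact that they each solve the empirical EIF estimating equation $\Pn D_{\hat\eta,\hat\theta}=0$ (the one-step estimator by construction, the TMLE through its targeting step). Writing $\thetaos-\theta_0 = \Pn D_{\hat\eta,\theta_0} + (\Pn - \P)(D_{\hat\eta,\hat\theta} - D_{\hat\eta,\theta_0}) + \P D_{\hat\eta,\theta_0} + R_n$, where $R_n$ collects the remainder, the key is to decompose
\[
\thetaos-\theta_0 = \Pn D_{\eta_1,\theta_0} + (\Pn-\P)(D_{\hat\eta,\theta_0}-D_{\eta_1,\theta_0}) + \P D_{\hat\eta,\theta_0} + o_P(n^{-1/2}).
\]
First I would handle the empirical-process term $(\Pn-\P)(D_{\hat\eta,\theta_0}-D_{\eta_1,\theta_0})$: assumption (ii) supplies the Donsker condition and the $L_2$-continuity $\P(D_{\hat\eta,\theta_0}-D_{\eta_1,\theta_0})^2\to 0$, so by the standard equicontinuity lemma for Donsker classes (e.g., van der Vaart and Wellner), this term is $o_P(n^{-1/2})$.

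Second I would control the drift term $\P D_{\hat\eta,\theta_0}$. Here the observation is that $\P D_{\hat\eta,\theta_0} = -(\hat\theta_{\text{plug-in}} - \theta_0) + (\text{second-order remainder})$ in the usual von Mises sense; more precisely, the robustness analysis summarized before Lemma~\ref{lemma:robust} and proved in Theorem~1 of the Supplementary Materials shows, via Cauchy--Schwarz, that $\P D_{\hat\eta,\theta_0}$ is bounded by $R(\hat\eta,\eta)$ up to the first-order linearization. Assumption (iii) then makes this $o_P(n^{-1/2})$. This is where positivity (assumption (i)) enters crucially: it guarantees that all the inverse-probability weights appearing in $D_{\P,Y}$, $D_{\P,Z}$, $D_{\P,M}$ and $D_{\P,W}$ — namely $1/\ps$, $1/\g$, $(1-\s)/\s$, and the implicit weights inside $\h$ — are bounded, so that the $L_2$ norms in $R(\eta_1,\eta)$ are well-defined and the Cauchy--Schwarz factorization is valid, and so that $\var(D_\eta(O))<\infty$.

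Combining the two steps with the exact identity $\Pn D_{\hat\eta,\hat\theta}=0$ gives $\thetaos-\theta_0 = \Pn D_{\eta_1,\theta_0} + o_P(n^{-1/2})$. To conclude I would argue that the probability limit $\eta_1$ must in fact satisfy one of the configurations of Lemma~\ref{lemma:robust} under the stated rate condition — or, more simply, note that under assumption (iii) the relevant nuisance parameters are consistent so $\eta_1=\eta$ and $D_{\eta_1,\theta_0}=D_{\eta,\theta_0}=D_\eta(O)$ is exactly the (mean-zero) EIF. Then the CLT yields $\sqrt n(\thetaos-\theta_0)\to N(0,\sigma^2)$ with $\sigma^2=\var(D_\eta(O))$, which equals the nonparametric efficiency bound since $D_\eta$ is the EIF (Theorem~\ref{theo:eif}). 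For $\thetatmle$ the argument is identical once one checks that the TMLE's targeting step drives $\Pn D_{\hat\eta,\thetatmle}$ to $o_P(n^{-1/2})$ and that the TMLE-updated nuisance estimators have the same probability limit and rates; I would also invoke the substitution property to keep $\thetatmle$ inside the parameter space.

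The main obstacle I anticipate is the drift term $\P D_{\hat\eta,\theta_0}$: one must verify that its exact expression (not merely its Cauchy--Schwarz bound) genuinely linearizes into $-(\text{plug-in bias})$ plus a product-structured second-order remainder matching $R(\eta_1,\eta)$, which requires carefully expanding each of the four EIF components $D_{\P,Y},D_{\P,Z},D_{\P,M},D_{\P,W}$ and using the tower property together with the identities defining $\h$, $\uu$, $\vv$ in Theorem~\ref{theo:eif} and the reparameterization in Lemma~\ref{lemma:aeif} — this is precisely the content of the Supplementary Materials' Theorem~1, so in the body I would simply cite it. A secondary, more routine check is that the Donsker and $L_2$-continuity conditions in assumption (ii) are not vacuous given data-adaptive $\hat\eta$; since they are assumed outright, no further work is needed here.
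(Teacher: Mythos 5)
Your proposal is correct and follows essentially the same route as the paper: the paper proves this theorem only by citing the general argument in Appendix 18 of \cite{vanderLaanRose11}, which is exactly the template you reconstruct (the estimating-equation identity, the Donsker/$L_2$-continuity control of the empirical-process term, and the Cauchy--Schwarz bound on $\P D_{\hat\eta,\theta_0}$ by $R(\hat\eta,\eta)$ from the supplementary multiple-robustness theorem, followed by the CLT). Your caveat that efficiency additionally requires $\eta_1=\eta$ so that the linear term is the true EIF is a fair reading of what the paper leaves implicit.
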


The proof of this theorem follows the general proof presented in Appendix 18 of
\cite{vanderLaanRose11}.  As a consequence, the
variance of the estimators that follow can be estimated as the sample
variance of the EIF, with $\hat{\theta}$ and the nuisance parameters
estimated as described above. This variance estimate may be used to construct Wald-type confidence intervals.

The Donsker condition of Theorem \ref{theo:assymp} may be
avoided by using cross-fitting
\citep{klaassen1987consistent,zheng2011cross, chernozhukov2016double}
in the estimation procedure. Let ${\cal V}_1, \ldots, {\cal V}_J$
denote a random partition of the index set $\{1, \ldots, n\}$ into $J$
prediction sets of approximately the same size. That is,
${\cal V}_j\subset \{1, \ldots, n\}$;
$\bigcup_{j=1}^J {\cal V}_j = \{1, \ldots, n\}$; and
${\cal V}_j\cap {\cal V}_{j'} = \emptyset$. In addition, for each $j$,
the associated training sample is given by
${\cal T}_j = \{1, \ldots, n\} \setminus {\cal V}_j$. let
$\hat \eta_{j}$ denote the estimator of $\eta$, obtained by training
the corresponding prediction algorithm using only data in the sample
${\cal T}_j$. Further, we let $j(i)$ denote the index of the
validation set which contains observation $i$. The one-step and TMLE
estimators may be adapted to cross-fitting by substituting all
occurrences of $\hat\eta(O_i)$ by $\hat\eta_{j(i)}(O_i)$ in the
respective algorithms.

The third condition of Theorem \ref{theo:assymp} can be satisfied by many data adaptive algorithms (e.g., lasso \citep{bickel2009simultaneous}, regression tress \citep{wager2015adaptive}, neural networks \citep{chen1999improved}, highly adaptive lasso (HAL)\citep{van2017generally}); we use HAL in the simulations that follow. 

\subsection{One-step estimator}\label{sec:os}
The one-step estimate of $\theta$ is given by the solution to the EIF
estimating equation:


\[\thetaos=\frac{1}{n}\sum_{i=1}^n \{D_{\hat\eta, Y}(O_i) + D_{\hat\eta, Z}(O_i) + 
  D_{\hat\eta, M}(O_i)\} + \frac{1}{n}\sum_{i=1}^n
  \frac{\one\{S_i=0\}}{\hat \ps}\hat\vv(a^*, W_i).\]

We first describe how to estimate $D_{\eta,Y}$. The regression
$\y(a^{\prime},z,m,w)$ can be estimated by fitting a regression of $Y$
on $W, A, Z, M$ among observations with $S=1$ and then predicting
values of $Y$ setting $A=a^\prime$. The probability $\ps$ is estimated
as the empirical proportion of observations with $S=0$ (i.e., in the
target population). The regression function $\s(a^{\prime},z,m,w)$ can
be estimated by fitting a regression of $S$ on $W,A,Z,M$ and
predicting the probability that $S=1$ setting $A=a'$. The treatment
mechanism $\g(a\mid w)$ for $a\in\{a',a^*\}$ can be estimated by
fitting a regression of $A$ on $(S,W)$ and predicting the probability
that $A=a$, setting $S=0$. For the motivating example we consider here
in which assignment of $A$ is randomized, these can be estimated as
the empirical probabilities that $A=a'$ and $A=a^*$ among those with
$S=0$. Under the reparameterization in Lemma \ref{lemma:aeif} and in our motivating example, $\q(z\mid a,w)$ can be estimated by
fitting a regression of $Z$ on $S,A,W$ and predicting the probability
that $Z=z$ setting $A=a^{\prime}, S=0$. Likewise, $\rr(z\mid a, m, w)$
can be estimated by fitting a regression of $Z$ on $S,A,M,W$ and
predicting the probability that $Z=z$ setting $A=a^{\prime}, S=0$. The
treatment probabilities $\e(a^{\prime}\mid m,w)$ and
$\e(a^* \mid m,w)$ can be estimated by fitting a regression of $A$ on
$S,M,W$ and predicting the probability that $A=a'$ and $A=a^*$,
respectively, setting $S=0$.

We next describe how to estimate $D_{\eta,Z}$. For binary $Z$, the EIF
simplifies to be 
\[D_{\eta,Z}(o) = \frac{\one\{s=0,a=a'\}}{\ps\times \g(a'\mid
    w)}\{\uu(1,a',w)-\uu(0,a',w)\}\left\{z - \q(1\mid a',w)\right\}.\] The parameters
$\ps, \g(a^{\prime}\mid w)$ and $\q(z \mid a^{\prime},w)$ can be
estimated as described above. For each $z$,
$\uu(z,a^{\prime},w)$ can be estimated by regressing the quantity
$\y(A,Z,M,W) \times
\h(A,Z,M,W)$ 
on $S,A,Z,W$ and getting predicted values, setting $Z=z, A=a^{\prime},
S=0$.

To estimate $D_{\eta,M}$, we estimate $\ps, \g(a^*\mid w),
\y(a^{\prime}, z,m,w), \q(z\mid a^{\prime},
w)$ as described above. The function
$\vv(a^*,w)$ can be estimated by marginalizing out
$Z$ from $\y(a^{\prime},z,M,W)$ using $\q(z\mid
a^{\prime},W)$ as predicted probabilities for each $z$, and then regressing the resulting quantity on
$A,W,S,$ and predicting values setting $A=a^*, S=0$.


\subsection{TML estimator}
\label{sec:tmle}
We now describe how to compute a related TML estimator. We
assume $Y$ can be bounded in $[0,1]$, as described previously \citep{gruber2010targeted}. Many of the steps are identical to those for
the one-step estimator, the differences are in the targeting of
$ \y(a^{\prime},z,m,w)$, $\q(z\mid a^{\prime},w)$, and
$\vv(a^*,w)$. 

Let $\hat{\y}(a^{\prime},z,m,w)$ be an initial estimate of
$\y(a^{\prime},z,m,w)$. We update this initial estimate using
covariate
\[\hat C_\y(A,Z,M,W) =
  \frac{1-\hat \s(A,Z,M,W)}{\hat \s(A,Z,M,W)}\frac{\hat \h(A,Z,M,W)}{\hat \g(A\mid
    W)\times \hat\ps}\] in a logistic regression of $Y$ with
$\logit \hat{\y}(A,Z,M,W)$ as an offset, among the subset
for which $A=a^{\prime}, S=1$. Let $\hat{\epsilon}_\y$ denote the MLE
fitted coefficient associated with $\hat C_\y(A,Z,M,W)$. The targeted (i.e.,
updated) estimate 
is given by
\[\logit\tilde{\y}(a',z,m,w) = \logit
  \hat{\y}(a^{\prime},z,m,w) + \hat{\epsilon}_\y \hat C_\y(a',z,m,w).\]
An alternative algorithm would use 
\[\frac{\hat \e(a^*|M,W)}{\hat
    \e(A\mid M,W)}\frac{1}{\hat\g(a^\star\mid W) \times \hat \ps}\] as weights of what would become a
weighted logistic regression model with covariate
\[\hat C_\y(A,Z,M,W) = \frac{1-\hat \s(A,Z,M,W)}{\hat \s(A,Z,M,W)}\frac{\hat \q(Z\mid A,W)}{\hat \rr(Z\mid A,M,W)}.\]
Next, let $\hat{\q}(z\mid a^{\prime},w)$ be an initial estimate of
$\q(z\mid a^{\prime},w)$. We update this initial estimate using
covariate
\[\hat C_\q(A,W) = \frac{\hat \uu(A,1,W) - \hat \uu(A,0,W)
  }{\hat\g(A\mid W)\times \hat\ps}\] in a logistic regression of $Z$
with $\logit \hat{\q}(A \mid A,W))$ as an offset, among the subset for
which $A=a^{\prime}, S=0$. Let $\hat{\epsilon}_\q$ be the MLE fitted
coefficient associated with $C_\q(A,W,S)$. The targeted estimate is
given by
\[\logit\tilde{\q}(z\mid a', w) = \logit
  \hat{\q}(z\mid a^{\prime}, w) + \hat{\epsilon}_\q \hat C_\q(a',w).\]
To potentially improve performance in finite samples, we can move
$\{\hat \g(A\mid W)\times \hat\ps\}^{-1}$ into the weights of a
weighted logistic regression model, leaving
$\hat\uu(A,1,W) - \hat\uu(A,0,W)$ as
$\hat C_\q(A,W)$. 
 
Replacing $\hat{\y}$ and $\hat{\q}$ with $\tilde{\y}$ and
$\tilde{\q}$, the above steps can be iterated until the score equation
$n^{-1}\sum_i\{D_{\tilde{\eta}, Y}(O_i) + D_{\tilde{\eta},
  Z}(O_i)\}=0$ is solved up to a factor of
$(\sqrt{n}\log(n))^{-1}$. This iterating process and stopping
criterion ensures that the efficient influence function is solved up
to $n^{-1/2}$ 
and mitigates risk of overfitting. 

Next, we marginalize out $Z$ from $\Tilde{\y}(a^{\prime},z,M,W)$ using
$\Tilde{\q}(z\mid a^{\prime},W)$ as predicted probabilities for each
$z$, and call the resulting quantity $Q$. This
quantity is then regressed on $(A, W)$ among units with $S=0$ to
obtain an estimator $\hat{\vv}(A, W)$.  
This estimate is updated using covariate
\[\hat C_\vv(A,W) = \frac{1}{\hat \g(A\mid W)\times \hat \ps}\] in a
logistic regression of $Q$ with $\logit \hat{\vv}(A,W))$ as an
offset, among the subset for which $A=a^*, S=0$. Let
$\hat{\epsilon}_\vv$ denote the MLE fitted coefficient on
$C_\vv(A,W,S)$. The targeted estimate is given by
\[\logit\tilde{\vv}(a^*, w) = \logit
  \hat{\vv}(a^*, w) + \hat{\epsilon}_\vv \hat C_\vv(a^*,w).\] To
potentially improve finite sample performance, $\hat C_\vv(A,W)$ may
be moved into the weights of a weighted logistic regression model with
intercept
only. 
The empirical mean of $\tilde{\vv}(a^*, W_i)$ among those for whom
$S=0$ is the TMLE estimate. Its variance can be estimated as
the sample variance of the estimated EIF, given in Eq \ref{eq:eif}.

\section{Simulation}
\label{sec:sim}
We conducted a limited simulation study to examine and compare finite
sample performance of these two estimators. We consider the
data-generating mechanism (DGM) as follows. All variables are
Bernoulli distributed with probabilities given by
{\footnotesize
  \begin{align*}
    P(W_1 = 1) &= 0.5                                                                                     \\
    P(W_2 = 1 \mid W_1) &= 0.4 + 0.2W_1                                                                   \\
    P(\Delta = 1 \mid W) &= \expit(-1 + \log(4)W_1 + \log(4)W_2)                                                                                      \\
    P(S = 1 \mid \Delta, W) &= \expit(\log(1.2)W_1 + \log(1.2)W_2 + \log(1.2)W_1W_2)                                \\
    P(A = 1\mid S, \Delta, W) &= 0.5                                                                                        \\
    P(Z = 1 \mid A, S, \Delta, W) &= \expit(-\log(2) + \log(4)A + -\log(2)W_2 + \log(1.4)S + \log(1.43)A\times S ) \\
    P(M = 1 \mid Z, A, S, \Delta, W) &= \expit(-\log(2) + \log(4)Z - \log(1.4)W_2 + \log(1.4)S)                        \\
    P(Y = 1 \mid M, Z, A, S, \Delta, W) &= \expit(-\log(5) + \log(8)Z + \log(4)M - \log(1.2)W_2 + \log(1.2)W_2Z          
  \end{align*}}

This DGM is formulated to align with features of the MTO study we use for
the illustrative example. For example, $A$ is randomly assigned and
adheres to the exclusion restriction
\citep{angrist1996identification}, aligned with its role as an
instrumental variable. In addition, we consider a modification of the
observed data we have considered thus far:
$\Delta\times O=\Delta\times (S, W, A, Z, M, S Y)$, where $\Delta$ is
an indicator of selection into the survey sample. We assume the survey
sampling weights are known or can be estimated as
\[\hat\Gamma_i = \frac{1}{\Pi_i}\frac{\sum_{i=1}^n
    (1-S_i)}{\sum_{i=1}^n (1-S_i)\Pi_i^{-1}},\] where
$\Pi = \P(\Delta = 1 \mid X)$ and $X$ represents unobserved variables
used in the sampling design. Our previous identification result, which
can alternatively be written as $\theta=\E[\vv(a^\star, W)\mid S=0]$,
then becomes
\begin{align*}
  \theta_{\Delta=1} & = \E\left[\Gamma\,\,\vv(a^\star, W)\,\bigg|\, S=0,\Delta=1  \right],  
\end{align*}
where we have added an index $\Delta=1$ to emphasize that we are interested in parameters for the population from which the sample was drawn. 
The EIF is modified to be $D_{\P,\Delta=1}(o)=\Gamma D_{\P}(o)$, and the estimators of the previous section can be applied by using the weights $\hat\Gamma_i$ for each subject in the sample.  

We consider estimator performance in terms of absolute bias, absolute bias scaled by $\sqrt{n}$, influence curve-based standard error relative to the Monte Carlo-based standard error, standard deviation of the estimator relative to the efficiency bound scaled by $\sqrt{n}$, mean squared error relative to the efficiency bound scaled by $n$, and 95\% confidence interval (CI) coverage. We run 1,000 simulations for sample sizes N=1,000 and N=10,000. We also consider several model specifications. One in which all nuisance parameters in $\eta$ are correctly specified, others that misspecify each nuisance parameter one at a time, another in which $\g(a^{\prime} \mid w), \y(a^{\prime},z,m,w), \q(z \mid a^{\prime},w)$ are correctly specified but the rest are not; and last, correctly specifying $\y(a^{\prime},z,m,w), \q(z \mid a^{\prime},w),\vv(a^*,w)$ but incorrectly specifying the rest. 
Under correct specification scenarios, we use HAL\citep{benkeser2016highly,van2017generally} to fit each nuisance parameter. For incorrect specification, we use an intercept-only model. 

Table \ref{tab:simdirect} shows simulation results for the transported stochastic direct effect, and Table \ref{tab:simindirect} shows simulation results for the transported stochastic indirect effect comparing the one-step and TML estimators under correct specification of all nuisance parameters and various misspecifications. Given the robustness results in Lemma \ref{lemma:robust}, we expect consistent estimates for all specifications in Tables \ref{tab:simdirect} and \ref{tab:simindirect} except when $\q$ is misspecified. We see this reflected in the results. We see that when the $\q$ model is misspecified, bias is more than an order of magnitude greater than any other specification for the transported stochastic direct effect in Table \ref{tab:simdirect}, and also greater, though to a lesser extent for the transported stochastic indirect effect in Table \ref{tab:simindirect}. 95\% CI coverage using IC-based inference is close to 95\% in the correctly specified scenario, but is poor when $\q$ is misspecified for the transported stochastic direct effect (Table \ref{tab:simdirect}), which is not unexpected given the biased estimates in this scenario. Coverage is less than 95\% in other misspecified scenarios for both the transported direct and indirect effects (e.g., 68\% when the $\y$ model is misspecified for the transported stochastic indirect effect, Table \ref{tab:simindirect}). This is not unexpected; the IC may not provide accurate inference when the IC at the estimated distribution using misspecified models does not converge to the IC at the true distribution. For robustness to extend to IC-based inference, further targeting of the nuisance parameters would be necessary that would preserve asymptotic linearity with a known influence curve at the cost of some efficiency.\citep{van2014targeted,benkeser2016doubly} Lastly, we note that under the smaller sample size of N=1,000 we see some deterioration in performance, particularly for the indirect effect, which is expected given that the true indirect effect is over five times smaller than the direct effect.

\begin{table}
\footnotesize
\centering
\caption{Simulation results for the transported stochastic direct effect.}
\begin{tabular}{|p{3.2cm} | p{1.5cm}| p{1cm} p{1cm} p{1cm} p{1cm}   p{1cm} p{2.1cm}   | }
\hline
Nuisance Parameters Misspecified & Estimator & $|\text{bias}|$& $\sqrt{n}|\text{bias}|$ & relse & relsd & relrmse &95\%CI Cov \\ 
  \hline
  \multicolumn{8}{|l|}{Transported stochastic direct effect}\\ \hline
  \multicolumn{8}{|l|}{N=10,000}\\ \hline
None & os & 0.0005 & 0.0490 & 1.0200 & 0.9489 & 0.9488 & 0.9570 \\ 
& tmle & 0.0004 & 0.0415 & 1.0040 & 0.9610 & 0.9608 & 0.9530 \\ 
$\s$ & os & 0.0005 & 0.0519 & 1.0023 & 0.8226 & 0.8227 & 0.9570 \\ 
 & tmle & 0.0003 & 0.0312 & 0.9577 & 0.8579 & 0.8576 & 0.9460 \\ 
$\g$& os & 0.0005 & 0.0480 & 1.0213 & 0.9481 & 0.9480 & 0.9580 \\ 
 & tmle & 0.0004 & 0.0408 & 1.0055 & 0.9600 & 0.9598 & 0.9520 \\ 
$\e$   & os & 0.0002 & 0.0156 & 1.0097 & 0.9431 & 0.9427 & 0.9520 \\ 
 & tmle & 0.0003 & 0.0301 & 0.9878 & 0.9602 & 0.9599 & 0.9460 \\ 
$\q$  & os & 0.0885 & 8.8488 & 0.7750 & 1.4727 & 5.2339 & 0.0250 \\ 
 & tmle & 0.0348 & 3.4814 & 1.0656 & 1.0154 & 2.2215 & 0.5580 \\ 
 $\rr$  & os & 0.0024 & 0.2382 & 1.0889 & 0.8724 & 0.8824 & 0.9640 \\ 
 & tmle & 0.0021 & 0.2134 & 1.0809 & 0.8788 & 0.8867 & 0.9640 \\ 
 $\y$  & os & 0.0047 & 0.4739 & 1.0460 & 0.9615 & 0.9979 & 0.9470 \\ 
 & tmle & 0.0107 & 1.0661 & 0.9908 & 1.0007 & 1.1690 & 0.9070 \\ 
 $\uu$  & os & 0.0053 & 0.5285 & 0.9400 & 0.9405 & 0.9867 & 0.9230 \\ 
& tmle & 0.0053 & 0.5262 & 0.9249 & 0.9530 & 0.9983 & 0.9150 \\ 
$\vv$   & os & 0.0005 & 0.0499 & 1.0213 & 0.9476 & 0.9476 & 0.9570 \\ 
& tmle & 0.0004 & 0.0421 & 1.0028 & 0.9621 & 0.9619 & 0.9520 \\ 
$\s, \e, \rr, \uu, \vv$  & os & 0.0023 & 0.2293 & 0.8924 & 0.7159 & 0.7272 & 0.9140 \\ 
 & tmle & 0.0019 & 0.1889 & 0.8519 & 0.7465 & 0.7538 & 0.9020 \\ 
$\s, \g, \e, \rr, \uu$& os & 0.0023 & 0.2321 & 0.8914 & 0.7165 & 0.7281 & 0.9140 \\ 
& tmle & 0.0019 & 0.1904 & 0.8548 & 0.7438 & 0.7513 & 0.9030 \\ 
  \hline
  \multicolumn{8}{|l|}{N=1,000}\\ 
   \hline
None&  os & 0.0014 & 0.0454 & 1.0200 & 0.8925 & 0.8921 & 0.9591\\ 
 & tmle & 0.0028 & 0.0880 & 0.9702 & 0.9309 & 0.9314 & 0.9414 \\ 
$\s$ & os & 0.0003 & 0.0104 & 1.0340 & 0.7691 & 0.7683 & 0.9600  \\ 
 & tmle & 0.0021 & 0.0648 & 0.9648 & 0.8190 & 0.8190 & 0.9460\\ 
$\g$ & os & 0.0019 & 0.0617 & 1.0167 & 0.8958 & 0.8957 & 0.9520 \\ 
 & tmle & 0.0032 & 0.1009 & 0.9697 & 0.9317 & 0.9327 & 0.9424 \\ 
 $\e$  & os & 0.0036 & 0.1134 & 1.0131 & 0.8855 & 0.8871 & 0.9520 \\ 
& tmle &0.0049 & 0.1550 & 0.9611 & 0.9252 & 0.9286 & 0.9440\\ 
$\q$ & os & 0.0672 & 2.1252 & 0.7993 & 1.3098 & 1.7796 & 0.7560 \\ 
 & tmle & 0.0280 & 0.8862 & 1.0124 & 0.9771 & 1.0981 & 0.9300 \\ 
$\rr$ & os & 0.0073 & 0.2303 & 1.1242 & 0.8149 & 0.8245 & 0.9620 \\ 
 & tmle & 0.0070 & 0.2202 & 1.0994 & 0.8315 & 0.8400 & 0.9620 \\
$\y$  & os & 0.0047 & 0.1499 & 1.0460 & 0.3041 & 0.3156 & 0.9470 \\ 
& tmle & 0.0107 & 0.3371 & 0.9908 & 0.3164 & 0.3697 & 0.9070 \\ 
$\uu$ & os & 0.0106 & 0.3362 & 0.9735 & 0.8614 & 0.8814 & 0.9420 \\ 
& tmle & 0.0101 & 0.3186 & 0.9234 & 0.8993 & 0.9164 & 0.9180 \\ 
$\vv$ & os & 0.0009 & 0.0295 & 1.0304 & 0.8827 & 0.8819 & 0.9589 \\ 
 & tmle & 0.0021 & 0.0668 & 0.9857 & 0.9152 & 0.9150 & 0.9498 \\ 
$\s, \e, \rr, \uu, \vv$ & os & 0.0030 & 0.0949 & 0.9553 & 0.6643 & 0.6657 & 0.9315 \\ 
& tmle & 0.0013 & 0.0424 & 0.9141 & 0.6898 & 0.6895 & 0.9224 \\ 
$\s, \g, \e, \rr, \uu$ & os & 0.0019 & 0.0591 & 0.9533 & 0.6663 & 0.6664 & 0.9291 \\ 
& tmle & 0.0001 & 0.0034 & 0.9044 & 0.6981 & 0.6974 & 0.9222 \\ 
   \hline
\end{tabular}
\label{tab:simdirect}
\end{table}

\begin{table}
\centering
\footnotesize
\caption{Simulation results for the transported stochastic indirect effect.}

\begin{tabular}{|p{3.2cm} p{1.5cm}| p{1cm} p{1cm} p{1cm} p{1cm}   p{1cm} p{2.1cm}   | }
  \hline
  \label{tab:simindirect}
Nuisance Parameters Misspecified & Estimator & $|\text{bias}|$& $\sqrt{n}|\text{bias}|$ & relse & relsd & relrmse &95\%CI Cov \\ 
  \hline
   \multicolumn{8}{|l|}{Transported stochastic indirect effect}\\
   \hline
  \multicolumn{8}{|l|}{N=10,000}\\ \hline
None & os & 0.0000 & 0.0030 & 0.9966 & 0.9760 & 0.9755 & 0.9420 \\ 
 & tmle & 0.0001 & 0.0065 & 0.9895 & 0.9778 & 0.9774 & 0.9400 \\ 
 $\s$  & os & 0.0003 & 0.0272 & 0.9864 & 0.9445 & 0.9456 & 0.9410 \\ 
 & tmle & 0.0001 & 0.0147 & 0.9734 & 0.9530 & 0.9529 & 0.9370 \\ 
$\g$ & os & 0.0000 & 0.0004 & 0.9976 & 0.9749 & 0.9744 & 0.9430 \\ 
 & tmle & 0.0000 & 0.0044 & 0.9907 & 0.9768 & 0.9764 & 0.9430 \\ 
$\e$  & os & 0.0006 & 0.0632 & 0.9610 & 0.8917 & 0.9003 & 0.9390 \\ 
& tmle & 0.0007 & 0.0668 & 0.9603 & 0.8887 & 0.8983 & 0.9380 \\
$\q$  & os & 0.0020 & 0.2035 & 0.9285 & 1.0709 & 1.1459 & 0.9070 \\ 
 & tmle & 0.0030 & 0.2951 & 0.8061 & 1.2396 & 1.3737 & 0.8410 \\ 
$\rr$  & os & 0.0003 & 0.0324 & 1.0081 & 1.0034 & 1.0051 & 0.9500 \\ 
 & tmle & 0.0001 & 0.0075 & 1.0429 & 0.9652 & 0.9648 & 0.9590 \\ 
$\y$ & os & 0.0001 & 0.0067 & 0.4870 & 1.0879 & 1.0874 & 0.6850 \\ 
& tmle & 0.0001 & 0.0099 & 0.4747 & 1.1763 & 1.1759 & 0.6700 \\ 
$\uu$ & os & 0.0000 & 0.0020 & 0.9997 & 0.9714 & 0.9709 & 0.9440 \\ 
 & tmle & 0.0000 & 0.0009 & 0.9919 & 0.9744 & 0.9739 & 0.9440 \\ 
$\vv$ & os & 0.0000 & 0.0026 & 1.0250 & 1.0114 & 1.0109 & 0.9550 \\ 
& tmle & 0.0001 & 0.0058 & 1.0015 & 1.0259 & 1.0254 & 0.9480 \\ 
$\s, \e, \rr, \uu, \vv$  & os & 0.0006 & 0.0618 & 0.9375 & 0.9834 & 0.9907 & 0.9400 \\ 
& tmle & 0.0007 & 0.0696 & 0.9168 & 0.9947 & 1.0040 & 0.9310 \\ 
$\s, \g, \e, \rr, \uu$ & os & 0.0007 & 0.0676 & 0.9032 & 0.9399 & 0.9492 & 0.9150 \\ 
 & tmle & 0.0008 & 0.0767 & 0.8996 & 0.9387 & 0.9508 & 0.9150 \\ 
\hline
  \multicolumn{8}{|l|}{N=1,000}\\ \hline
None& os &  0.0007 & 0.0229 & 0.9010 & 1.0200 & 1.0201 & 0.9041 \\ 
& tmle & 0.0006 & 0.0200 & 0.8900 & 1.0209 & 1.0207 & 0.8988 \\
$\s$ &os & 0.0013 & 0.0407 & 0.8891 & 0.9901 & 0.9924 & 0.8900 \\ 
 & tmle & 0.0011 & 0.0337 & 0.8729 & 0.9970 & 0.9983 & 0.8880 \\
$\g$  & os & 0.0009 & 0.0293 & 0.9092 & 1.0185 & 1.0194 & 0.9072 \\ 
& tmle & 0.0008 & 0.0242 & 0.8974 & 1.0193 & 1.0197 & 0.8992 \\ 
 $\e$ & os & 0.0026 & 0.0818 & 0.9025 & 0.9447 & 0.9582 & 0.8992 \\ 
& tmle & 0.0025 & 0.0797 & 0.8953 & 0.9415 & 0.9543 & 0.8976 \\ 
  $\q$ & os &0.0017 & 0.0541 & 0.8405 & 1.0920 & 1.0963 & 0.8700 \\ 
& tmle & 0.0017 & 0.0525 & 0.7671 & 1.1978 & 1.2012 & 0.8440 \\ 
$\rr$ & os & 0.0004 & 0.0120 & 0.8955 & 1.0476 & 1.0468 & 0.9080 \\ 
& tmle & 0.0008 & 0.0263 & 0.9121 & 1.0109 & 1.0112 & 0.8980 \\ 
$\y$ & os & 0.0001 & 0.0021 & 0.4870 & 0.3440 & 0.3439 & 0.6850 \\ 
& tmle & 0.0001 & 0.0031 & 0.4747 & 0.3720 & 0.3719 & 0.6700 \\ 
$\uu$ & os & 0.0008 & 0.0264 & 0.8872 & 1.0328 & 1.0331 & 0.8900 \\ 
& tmle & 0.0007 & 0.0215 & 0.8655 & 1.0424 & 1.0423 & 0.8800 \\ 
  $\vv$  & os & 0.0004 & 0.0114 & 0.9605 & 1.0274 & 1.0265 & 0.9247 \\ 
& tmle & 0.0003 & 0.0086 & 0.9279 & 1.0450 & 1.0440 & 0.9110 \\ 
  $\s, \e, \rr, \uu, \vv$  & os & 0.0023 & 0.0733 & 0.8522 & 1.0237 & 1.0331 & 0.8973 \\ 
& tmle & 0.0026 & 0.0831 & 0.8284 & 1.0286 & 1.0410 & 0.8881 \\ 
$\s, \g, \e, \rr, \uu$ & os & 0.0020 & 0.0618 & 0.8595 & 0.9285 & 0.9358 & 0.8741 \\ 
& tmle & 0.0022 & 0.0708 & 0.8492 & 0.9229 & 0.9328 & 0.8741 \\ \hline
\end{tabular}
\end{table}

\section{Illustrative Example}
We apply the one-step and TML estimators proposed in Section
\ref{sec:est} to estimate stochastic indirect effects transported
across MTO sites, as described in the Introduction. Specifically, we
are interested in the extent to which differences in: a) the
distribution of individual-level compositional factors between the
sites, b) take-up of the intervention (i.e., using the housing voucher
to move), and c) distribution of school environment mediating
variables can explain the difference in the indirect effect estimates
between MTO sites.

For this example, we consider the indirect effect of randomized
receipt of a Section 8 housing voucher ($A$) and subsequent use ($Z$)
on behavioral problems ($Y$) \citep{zill1990behavior} through aspects
of the school environment ($M$, i) rank of the schools attended, and
ii) whether ever attended a school in the top 50\% of rankings, iii)
number of schools attended, iv) number of moves since baseline, v)
average proportion of students receiving free or reduced lunch, vi)
ratio of students to teachers, vii) proportion of schools attended
that were Title I, and viii) whether or not the most recent school
attended was in the same district as the baseline school) among girls,
comparing the Los Angeles (LA) and New York City (NYC) sites ($S=1$,
N=1,000) to the Chicago site ($S=0$, N=600). We do this in order to
illustrate our methods: the outcomes in Chicago were actually
observed, so we can compared the transported estimate with estimates
obtained using Chicago outcome data. Variables $W$ and $A$ were measured at baseline, when
the children were 0-10 years old. Mediating variables were measured
during the interval between baseline and the final follow-up timepoint
10-15 years later. The outcome was measured at the final follow-up
timepoint.  We account for a large number of covariates at the child
and family levels: child age, race/ethnicity, history of behavioral
problems, and gifted/talented status; parental education, marital
status, whether or not the parent was under 18 at the birth of the
child, employment, receipt of other public benefits, household size,
feeling like the neighborhood was unsafe at night, feeling very
dissatisfied with the neighborhood, whether or not the family had
previously moved more than three times, wanting to move for better
schools, whether or not the family had received a Section 8 voucher
before, and poverty level of the baseline neighborhood. For this
research question, randomization to receive a Section 8 housing
voucher is an instrumental variable that affects $M$ and $Y$ through
the intermediate variable of using the voucher to move out of public
housing and into a rental on the private market ($Z$). We use the MTO
sampling weights as described in Section \ref{sec:sim}. These weights
account for sampling of children within families, changing
randomization ratios, and loss to
follow-up \citep{sanbonmatsu2011moving}. We use data-adaptive methods
for fitting the nuisance parameters, using a cross-validated ensemble
of machine learning algorithms \citep{van2007super}, that includes
generalized linear models, intercept-only models, and
lasso \citep{tibshirani1996regression} that included all first and
second-order predictors. To estimate the observed, non-transported
stochastic indirect effects, we use non-transported versions of the
one-step and TML estimators developed previously \citep{diaz2019non}.
Standard errors are estimated using the sample variance of the
influence curve.

Figure \ref{fig:examp} shows the transported and observed indirect effect estimates and their 95\% CIs. Looking at the observed estimates, the indirect pathway from housing voucher receipt and use through the school environment to behavioral problems is protective for girls in LA and NYC, resulting in a reduction in behavioral problems at the final time point. However, the same pathway appears harmful for girls in Chicago, resulting in an increase of behavioral problems. Comparing the transported stochastic indirect effect estimate  (one-step estimator: 0.0043, 95\% CI: -0.0150, 0.0237, risk difference scale; TMLE: 0.0153, 95\% CI: -0.0150, 0.0420) to the observed estimate for girls in Chicago (0.0089, 95\% CI: 0.0007, 0.0171), we see that the two are similar even though the outcome data from Chicago was not used in the transported estimates. Thus, by taking the outcome model for LA and NYC and standardizing based on $W, A, Z, M$ in Chicago, the predicted effect for Chicago is close to the observed. In contrast if they were not close to each other, this would suggest that the identification assumptions were not met. In the context of MTO, identification assumption (iv) of a common outcome model is arguably the most tenuous. This assumption would not hold in the presence of any contextual-level effects on the outcome model, such as the local economy, housing market conditions, segregation, etc. 

\begin{figure}
  \caption{Stochastic indirect effects estimates of being randomized
    to the Section 8 voucher group on behavioral problems score in
    adolescence, 10-15 years later, mediated through features of the
    school environment, among girls. Estimates and 95\% CIs for
    observed (nontransported) and transported predicted effects. All
    results were approved for release by the U.S. Census Bureau,
    authorization numbers CBDRB-FY20-ERD002-023 and CBDRB-FY20-ERD002-024.}
\centering
\includegraphics[height=.8\textwidth,keepaspectratio]{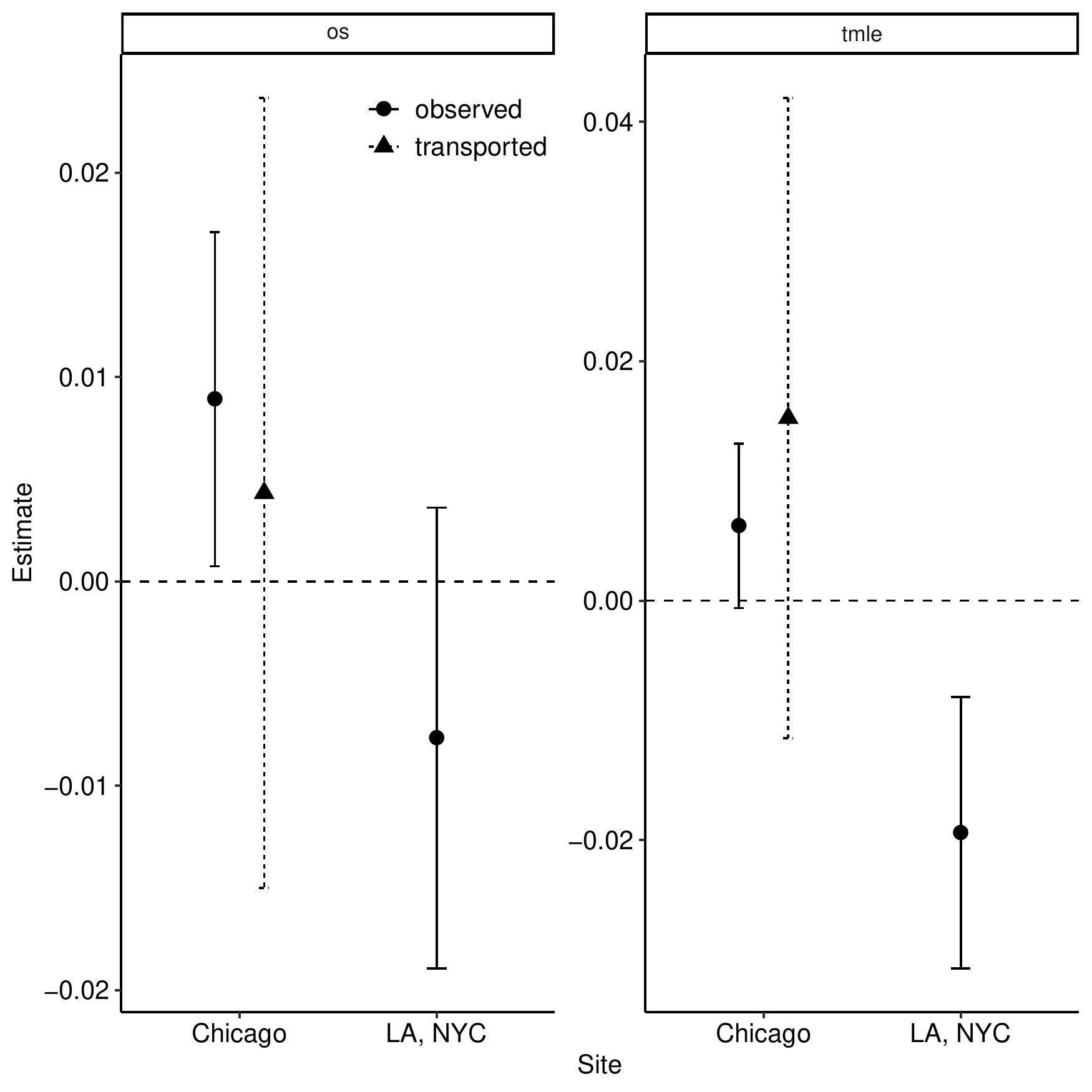}
\label{fig:examp}
\end{figure}

\section{Conclusions}
We proposed estimators for transported stochastic direct and indirect
effects under intermediate confounding and allowing for multiple,
possibly related mediating variables arising from a true, unknown
joint distribution. These estimators solve the efficient influence
function; one that does so in one step and the other that is a
substitution estimator that incorporates a series of targeting steps
to optimize the bias-variance trade-off. We derived their multiple
robustness properties and examined finite sample performance in a
simulation study. Lastly, we applied our proposed estimators to better
understand why a particular pathway from a housing intervention
through changes in the school environment resulted in an unintended
harmful effect on behavioral problems among girls in Chicago, when it
led to improvements in behavioral problems among girls in other
cities. 

\section*{Acknowledgements}
This research was conducted as a part of the U.S. Census Bureau's
Evidence Building Project Series. The U.S. Census Bureau has not
reviewed the paper for accuracy or reliability and does not endorse
its contents. Any conclusions expressed herein are those of the
authors and do not necessarily represent the views of the U.S. Census
Bureau. All results were approved for release by the U.S. Census
Bureau, authorization numbers CBDRB-FY20-ERD002-023 and CBDRB-FY20-ERD002-024.
\bibliographystyle{rss}
\bibliography{refs}

\begin{thebibliography}{41}
\expandafter\ifx\csname natexlab\endcsname\relax\def\natexlab#1{#1}\fi
\expandafter\ifx\csname url\endcsname\relax
  \def\url#1{\texttt{#1}}\fi
\expandafter\ifx\csname urlprefix\endcsname\relax\def\urlprefix{URL: }\fi

\bibitem[{Angrist et~al.(1996)Angrist, Imbens and
  Rubin}]{angrist1996identification}
Angrist, J.~D., Imbens, G.~W. and Rubin, D.~B. (1996) Identification of causal
  effects using instrumental variables.
\newblock \textit{Journal of the American statistical Association},
  \textbf{91}, 444--455.

\bibitem[{Arnold et~al.(2018)Arnold, Null, Luby and
  Colford}]{arnold2018implications}
Arnold, B.~F., Null, C., Luby, S.~P. and Colford, J.~M. (2018) Implications of
  wash benefits trials for water and sanitation--authors' reply.
\newblock \textit{The Lancet Global Health}, \textbf{6}, e616--e617.

\bibitem[{Avin et~al.(2005)Avin, Shpitser and Pearl}]{avin2005identifiability}
Avin, C., Shpitser, I. and Pearl, J. (2005) Identifiability of path-specific
  effects.
\newblock In \textit{IJCAI International Joint Conference on Artificial
  Intelligence}, 357--363.

\bibitem[{Benkeser et~al.(2016)Benkeser, Carone, van~der Laan and
  Gilbert}]{benkeser2016doubly}
Benkeser, D., Carone, M., van~der Laan, M.~J. and Gilbert, P. (2016)
  Doubly-robust nonparametric inference on the average treatment effect.
\newblock \textit{Tech. Rep. 356}, U.C. Berkeley Division of Biostatistics
  Working Paper Series.

\bibitem[{Benkeser and van~der Laan(2016)}]{benkeser2016highly}
Benkeser, D. and van~der Laan, M. (2016) The highly adaptive lasso estimator.
\newblock In \textit{2016 IEEE International Conference on Data Science and
  Advanced Analytics (DSAA)}, 689--696. IEEE.

\bibitem[{Bickel et~al.(1997)Bickel, Klaassen, Ritov and Wellner}]{Bickel97}
Bickel, P.~J., Klaassen, C.~A., Ritov, Y. and Wellner, J.~A. (1997)
  \textit{Efficient and Adaptive Estimation for Semiparametric Models}.
\newblock Springer-Verlag.

\bibitem[{Bickel et~al.(2009)Bickel, Ritov, Tsybakov
  et~al.}]{bickel2009simultaneous}
Bickel, P.~J., Ritov, Y., Tsybakov, A.~B. et~al. (2009) Simultaneous analysis
  of lasso and dantzig selector.
\newblock \textit{The Annals of Statistics}, \textbf{37}, 1705--1732.

\bibitem[{Chen and White(1999)}]{chen1999improved}
Chen, X. and White, H. (1999) Improved rates and asymptotic normality for
  nonparametric neural network estimators.
\newblock \textit{IEEE Transactions on Information Theory}, \textbf{45},
  682--691.

\bibitem[{Chernozhukov et~al.(2016)Chernozhukov, Chetverikov, Demirer, Duflo,
  Hansen et~al.}]{chernozhukov2016double}
Chernozhukov, V., Chetverikov, D., Demirer, M., Duflo, E., Hansen, C. et~al.
  (2016) Double machine learning for treatment and causal parameters.
\newblock \textit{arXiv preprint arXiv:1608.00060}.

\bibitem[{D{\'\i}az et~al.(2019)D{\'\i}az, Hejazi, Rudolph and van~der
  Laan}]{diaz2019non}
D{\'\i}az, I., Hejazi, N.~S., Rudolph, K.~E. and van~der Laan, M.~J. (2019)
  Non-parametric efficient causal mediation with intermediate confounders.
\newblock \textit{arXiv preprint arXiv:1912.09936}.

\bibitem[{Didelez et~al.(2006)Didelez, Dawid and Geneletti}]{didelez2006direct}
Didelez, V., Dawid, A.~P. and Geneletti, S. (2006) Direct and indirect effects
  of sequential treatments.
\newblock In \textit{Proceedings of the Twenty-Second Conference on Uncertainty
  in Artificial Intelligence}, 138--146. AUAI Press.

\bibitem[{Gruber and van~der Laan(2010)}]{gruber2010targeted}
Gruber, S. and van~der Laan, M.~J. (2010) A targeted maximum likelihood
  estimator of a causal effect on a bounded continuous outcome.
\newblock \textit{The International Journal of Biostatistics}, \textbf{6}.

\bibitem[{Klaassen(1987)}]{klaassen1987consistent}
Klaassen, C.~A. (1987) Consistent estimation of the influence function of
  locally asymptotically linear estimators.
\newblock \textit{The Annals of Statistics}, 1548--1562.

\bibitem[{van~der Laan(2017)}]{van2017generally}
van~der Laan, M. (2017) A generally efficient targeted minimum loss based
  estimator based on the highly adaptive lasso.
\newblock \textit{The international journal of biostatistics}, \textbf{13}.

\bibitem[{van~der Laan(2014)}]{van2014targeted}
van~der Laan, M.~J. (2014) Targeted estimation of nuisance parameters to obtain
  valid statistical inference.
\newblock \textit{The international journal of biostatistics}, \textbf{10},
  29--57.

\bibitem[{van~der Laan and Petersen(2008)}]{van2008direct}
van~der Laan, M.~J. and Petersen, M.~L. (2008) Direct effect models.
\newblock \textit{The international journal of biostatistics}, \textbf{4}.

\bibitem[{Van~der Laan et~al.(2007)Van~der Laan, Polley and
  Hubbard}]{van2007super}
Van~der Laan, M.~J., Polley, E.~C. and Hubbard, A.~E. (2007) Super learner.
\newblock \textit{Statistical applications in genetics and molecular biology},
  \textbf{6}.

\bibitem[{Luedtke et~al.(2019)Luedtke, Carone and van~der
  Laan}]{luedtke2019omnibus}
Luedtke, A., Carone, M. and van~der Laan, M.~J. (2019) An omnibus
  non-parametric test of equality in distribution for unknown functions.
\newblock \textit{Journal of the Royal Statistical Society: Series B
  (Statistical Methodology)}, \textbf{81}, 75--99.

\bibitem[{Miller(2015)}]{miller2015projected}
Miller, T.~R. (2015) Projected outcomes of nurse-family partnership home
  visitation during 1996--2013, usa.
\newblock \textit{Prevention Science}, \textbf{16}, 765--777.

\bibitem[{Orr et~al.(2003)Orr, Feins, Jacob, Beecroft, Sanbonmatsu, Katz,
  Liebman and Kling}]{orr2003moving}
Orr, L., Feins, J., Jacob, R., Beecroft, E., Sanbonmatsu, L., Katz, L.~F.,
  Liebman, J.~B. and Kling, J.~R. (2003) \textit{Moving to opportunity: Interim
  impacts evaluation}.
\newblock Washington DC: US Department of Housing and Urban Development, Office
  of Policy Development and Research.

\bibitem[{Pearl(2009)}]{Pearl2009}
Pearl, J. (2009) {Myth, Confusion, and Science in Causal Analysis}.
\newblock \textit{Tech. Rep. R-348}, Cognitive Systems Laboratory, Computer
  Science Department University of California, Los Angeles, Los Angeles, CA.

\bibitem[{Pearl and Bareinboim(2018)}]{pearl2018transportability}
Pearl, J. and Bareinboim, E. (2018) Transportability across studies: A formal
  approach.

\bibitem[{Petersen et~al.(2006)Petersen, Sinisi and van~der
  Laan}]{petersen2006estimation}
Petersen, M.~L., Sinisi, S.~E. and van~der Laan, M.~J. (2006) Estimation of
  direct causal effects.
\newblock \textit{Epidemiology}, 276--284.

\bibitem[{Pfanzagl and Wefelmeyer(1985)}]{pfanzagl1982contributions}
Pfanzagl, J. and Wefelmeyer, W. (1985) Contributions to a general asymptotic
  statistical theory.
\newblock \textit{Statistics \& Risk Modeling}, \textbf{3}, 379--388.

\bibitem[{Rubin(1974)}]{Rubin74}
Rubin, D.~B. (1974) {Estimating Causal Effects of Treatments in Randomized \&
  Nonrandomized Studies}.
\newblock \textit{Journal of Educational Psychology}.
\newblock
  \urlprefix\url{http://www.eric.ed.gov/ERICWebPortal/detail?accno=EJ118470}.

\bibitem[{Rudolph et~al.(2019)Rudolph, Levy and van~der
  Laan}]{rudolph2019transporting}
Rudolph, K.~E., Levy, J. and van~der Laan, M.~J. (2019) Transporting stochastic
  direct and indirect effects to new populations.
\newblock \textit{arXiv preprint arXiv:1903.03690}.

\bibitem[{Rudolph et~al.(2020)Rudolph, Levy, Schmidt, Stuart and
  Ahern}]{rudolph2020using}
Rudolph, K.~E., Levy, J., Schmidt, N.~M., Stuart, E.~A. and Ahern, J. (2020)
  Using transportability to understand differences in mediation mechanisms
  across trial sites: applying a novel estimation approach to a large-scale
  housing voucher experiment.
\newblock \textit{Epidemiology}, In Press.

\bibitem[{Rudolph et~al.(2018{\natexlab{a}})Rudolph, Schmidt, Glymour, Crowder,
  Galin, Ahern and Osypuk}]{rudolph2018composition}
Rudolph, K.~E., Schmidt, N.~M., Glymour, M.~M., Crowder, R., Galin, J., Ahern,
  J. and Osypuk, T.~L. (2018{\natexlab{a}}) Composition or context: Using
  transportability to understand drivers of site differences in a large-scale
  housing experiment.
\newblock \textit{Epidemiology (Cambridge, Mass.)}, \textbf{29}, 199--206.

\bibitem[{Rudolph et~al.(2018{\natexlab{b}})Rudolph, Sofrygin, Schmidt,
  Crowder, Glymour, Ahern and Osypuk}]{rudolph2018mediation}
Rudolph, K.~E., Sofrygin, O., Schmidt, N.~M., Crowder, R., Glymour, M.~M.,
  Ahern, J. and Osypuk, T.~L. (2018{\natexlab{b}}) Mediation of neighborhood
  effects on adolescent substance use by the school and peer environments.
\newblock \textit{Epidemiology}, \textbf{29}, 590--598.

\bibitem[{Rudolph et~al.(2017)Rudolph, Sofrygin, Zheng and Van
  Der~Laan}]{rudolph2017robust}
Rudolph, K.~E., Sofrygin, O., Zheng, W. and Van Der~Laan, M.~J. (2017) Robust
  and flexible estimation of stochastic mediation effects: A proposed method
  and example in a randomized trial setting.
\newblock \textit{Epidemiologic Methods}, \textbf{7}.

\bibitem[{Sanbonmatsu et~al.(2011)Sanbonmatsu, Katz, Ludwig, Gennetian, Duncan,
  Kessler, Adam, McDade and Lindau}]{sanbonmatsu2011moving}
Sanbonmatsu, L., Katz, L.~F., Ludwig, J., Gennetian, L.~A., Duncan, G.~J.,
  Kessler, R.~C., Adam, E.~K., McDade, T. and Lindau, S.~T. (2011) Moving to
  opportunity for fair housing demonstration program: Final impacts evaluation.

\bibitem[{Tchetgen and VanderWeele(2014)}]{tchetgen2014identification}
Tchetgen, E. J.~T. and VanderWeele, T.~J. (2014) On identification of natural
  direct effects when a confounder of the mediator is directly affected by
  exposure.
\newblock \textit{Epidemiology (Cambridge, Mass.)}, \textbf{25}, 282.

\bibitem[{Tibshirani(1996)}]{tibshirani1996regression}
Tibshirani, R. (1996) Regression shrinkage and selection via the lasso.
\newblock \textit{Journal of the Royal Statistical Society: Series B
  (Methodological)}, \textbf{58}, 267--288.

\bibitem[{van~der Vaart(2002)}]{van2002part}
van~der Vaart, A. (2002) Semiparameric statistics.
\newblock \textit{Lectures on Probability Theory and Statistics}, 331--457.

\bibitem[{{van der Laan} and Rose(2011)}]{vanderLaanRose11}
{van der Laan}, M.~J. and Rose, S. (2011) \textit{Targeted Learning: Causal
  Inference for Observational and Experimental Data}.
\newblock New York: Springer.

\bibitem[{{van der Laan} and Rubin(2006)}]{vdl2006targeted}
{van der Laan}, M.~J. and Rubin, D. (2006) Targeted maximum likelihood
  learning.
\newblock \textit{The International Journal of Biostatistics}, \textbf{2}.

\bibitem[{VanderWeele et~al.(2014)VanderWeele, Vansteelandt and
  Robins}]{vanderweele2014effect}
VanderWeele, T.~J., Vansteelandt, S. and Robins, J.~M. (2014) Effect
  decomposition in the presence of an exposure-induced mediator-outcome
  confounder.
\newblock \textit{Epidemiology (Cambridge, Mass.)}, \textbf{25}, 300.

\bibitem[{Wager and Walther(2015)}]{wager2015adaptive}
Wager, S. and Walther, G. (2015) Adaptive concentration of regression trees,
  with application to random forests.
\newblock \textit{arXiv preprint arXiv:1503.06388}.

\bibitem[{Zheng and van~der Laan(2011)}]{zheng2011cross}
Zheng, W. and van~der Laan, M.~J. (2011) Cross-validated targeted
  minimum-loss-based estimation.
\newblock In \textit{Targeted Learning}, 459--474. Springer.

\bibitem[{Zheng and van~der Laan(2012)}]{zheng2012targeted}
--- (2012) Targeted maximum likelihood estimation of natural direct effects.
\newblock \textit{The international journal of biostatistics}, \textbf{8},
  1--40.

\bibitem[{Zill(1990)}]{zill1990behavior}
Zill, N. (1990) \textit{Behavior problems index based on parent report}.
\newblock Child Trends.

\end{thebibliography}

\newpage
\appendix

\section{Efficient influence function Theorem~1}
\begin{proof}
  In this proof we will use $\Theta(\P)$ to denote a parameter as a
  functional that maps the distribution $\P$ in the model to a real
  number. We will use $\Pn$ to denote the empirical distribution of $O_1, \ldots, O_n$. We will assume that the measure $\nu$ is discrete so that
  integrals can be written as sums. It can be checked algebraically
  that the resulting influence function will also correspond to the
  influence function of a general measure $\nu$. The true parameter
  value is thus given by
  \[\theta=\Theta(\P) = \sum_{y,z,m,w} y\,\p(y\mid a', z, m,
    w,1)\q(z\mid a',w,0)\p(m\mid a^{\star},w,0)\p(w\mid 0).\] The non-parametric MLE of
  $\theta$ is given by
  \begin{equation}
    \Theta(\Pn)=\sum_{y,z,m,w}y\frac{\Pn f_{y,a',z,m,w,1}}{\Pn
      f_{a',z,m,w,1}}\frac{\Pn f_{z,a',w,0}}{\Pn f_{a',w,0}}\frac{\Pn
      f_{m,a^{\star},w,0}}{\Pn f_{a^{\star},w,0}}\frac{\Pn
      f_{w,0}}{\Pn f_0}\label{nonpest},
  \end{equation}
  where we remind the reader of the notation $\P f =\int f \dd\P$. Here
  $f_{y,a,z,m,w,1}=\one(Y=y,A=a,Z=z,M=m,W=w,S=1)$, and $\one(\cdot)$ denotes
  the indicator function. The other functions $f$ are defined
  analogously.

  We will use the fact that the efficient influence function in a
  non-parametric model corresponds with the influence curve of the
  NPMLE. This is true because the influence curve of any regular
  estimator is also a gradient, and a non-parametric model has only
  one gradient. The Delta method shows that if $\hat \Theta(\Pn)$ is a
  substitution estimator such that $\theta=\hat \Theta(\P)$, and
  $\hat \Theta(\Pn)$ can be written as
  $\hat \Theta^{\star}(\Pn f:f\in\mathcal{F})$ for some class of functions
  $\mathcal{F}$ and some mapping $\Theta^{\star}$, the influence function of
  $\hat \Theta(\Pn)$ is equal to
  \[\dr_\P(O)=\sum_{f\in\mathcal{F}}\frac{\dd\hat
  \Theta^{\star}(\P)}{\dd\P f}\{f(O)-\P f\}.\]

  Applying this result to (\ref{nonpest}) with
  $\mathcal{F}=\{f_{y,a',z,m,w,1},f_{a',z,m,w,1},f_{z,a',w,0},f_{a',w,0},
  f_{m,a^{\star},w,0}, f_{a^{\star},w,0},f_{w,0},f_0:
  y,z,m,w\}$ and rearranging terms gives the result of the theorem.
  The algebraic derivations involved here are lengthy and not
  particularly illuminating, and are therefore omitted from the proof.
\end{proof}

\section{Lemma~1}
\begin{proof}
This result follows by replacing
\begin{equation}
  \p(m\mid a^{\star}, w)
  =\p(m\mid a', z, w)\frac{\g(a'\mid w)}{\g(a^{\star}\mid w)} \frac{\q(z\mid
    a',w)}{\rr(z\mid a',m,w)}\frac{\e(a^{\star}\mid m, w)}{\e(a'\mid
    m,w)}\label{eq:bayes}
\end{equation}
in expression (4) in the main text.
\end{proof}

\section{Additional results}
\begin{theorem}[Multiple robustness of the EIF]
  For notational simplicity, in this theorem we omit the dependence of all
  functions on $w$. Let $\P_W$ denote the distribution of $W$
  conditional on $S=0$. We have {\small
    \begin{align*}
      \P & D_{\eta_1} = \notag\\
         &\int
           \{\vv_1(a^{\star})-\vv(a^{\star})\}\left
           \{1-\frac{\g(a^{\star})}{\g_1(a^{\star})}\right\}
           \dd\P_W\dd\nu(m,z) + \\
         & \int \frac{\g(a')}{\g_1(a^{\star})}\frac{\bb(a',z,m)}{1-\bb(a',z,m)}\left
           \{\frac{1-\bb_1(a',z,m)}{\bb_1(a',z,m)}\frac{\q_1(z\mid a')}{\rr_1(z\mid
           a',m)}\frac{\e_1(a^{\star}\mid m)}{\e_1(a'\mid m)} -
           \frac{1-\bb(a',z,m)}{\bb(a',z,m)}\frac{\q(z\mid a')}{\rr(z\mid a',m)}\frac{\e(a^{\star}\mid m)}
           {\e(a'\mid m)}\right\}\times\\
      &\times\{\m(a',z,m) - \m_1(a',z,m)\}
           \p(m, z\mid a')\dd\P_W\dd\nu(m,z) + \\
         &\int\frac{\g(a')}{\g_1(a^{\star})}\left\{\uu(z,a') -
           \uu_1(z,a')\right\}\{\q_1(z\mid a')-\q(z\mid
           a')\}\dd\P_W\dd\nu(m,z)+\\
         &\int\frac{\g(a')}{\g_1(a^{\star})}\frac{\q(z\mid
           a')}{\rr(z\mid a',m)}\frac{\e(a^{\star} \mid m)}{\e(a'\mid
           m)}\left\{\m_1(a',z,m) - \m(a',z,m)\right\}\{\q_1(z\mid a')-\q(z\mid
           a')\}\p(m\mid z, a')\dd\P_W\dd\nu(m,z).
    \end{align*}}
\label{theo:mr}
\end{theorem}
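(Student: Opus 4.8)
Throughout write $\m$ for the outcome regression $\E(Y\mid A,Z,M,W,S=1)$ and $\bb$ for $\P(S=1\mid A,Z,M,W)$, as in the statement, treat $\ps$ as known ($\ps_1=\ps$), and use $\theta=\E[\vv(a^\star,W)\mid S=0]$. The plan is a direct moment calculation: substitute the limiting nuisances $\eta_1$ into the four summands $D_{\P,Y},D_{\P,Z},D_{\P,M},D_{\P,W}$ of the EIF of Theorem~\ref{theo:eif}, take each $\P$-expectation by iterated conditioning, and regroup. I would use the reparameterized forms of Lemma~\ref{lemma:aeif} for $\h_1,\uu_1,\vv_1$ — in particular $\h_1=\frac{\g_1(a'\mid w)}{\g_1(a^\star\mid w)}\frac{\q_1(z\mid a',w)}{\rr_1(z\mid a',m,w)}\frac{\e_1(a^\star\mid m,w)}{\e_1(a'\mid m,w)}$ — since this is what manufactures the $\g(a')/\g_1(a^\star)$ weights in the final display; as a sanity check, every resulting term vanishes when $\eta_1=\eta$, consistent with $\P D_{\eta,\theta}=0$.

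First I would treat $\P D_{\eta_1,Y}$. Conditioning on $(S=1,A=a',Z,M,W)$ and integrating out $Y$ replaces $y$ by $\m(a',z,m,w)$, leaving the residual $\m-\m_1$. Writing $\P(S=1,a',z,m,w)=\frac{\bb(a',z,m,w)}{1-\bb(a',z,m,w)}\,\ps\,\g(a'\mid w)\,\q(z\mid a',w)\,\p(m\mid a',z,w)\,\p(w\mid 0)$, performing the cancellations and substituting the reparameterized $\h_1$, the expectation becomes $\int\frac{\g(a')}{\g_1(a^\star)}\frac{\bb}{1-\bb}\frac{1-\bb_1}{\bb_1}\frac{\q_1}{\rr_1}\frac{\e_1}{\e_1}\{\m-\m_1\}\,\p(m,z\mid a')\,\dd\P_W\dd\nu(m,z)$. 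Adding and subtracting the same expression with the true weight $\frac{1-\bb}{\bb}\frac{\q}{\rr}\frac{\e}{\e}$ in place of the $\eta_1$-version isolates the second line of the claim; the leftover, after using \eqref{eq:bayes} to convert $\frac{\q}{\rr}\frac{\e}{\e}\,\p(m,z\mid a')$ into $\frac{\g(a^\star)}{\g(a')}\,\q(z\mid a')\,\p(m\mid a^\star)$ and then the definition of $\vv$ in \eqref{eq:defhuv}, is a $\vv$-type remainder carried forward.

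Next I would compute $\P D_{\eta_1,Z}$, $\P D_{\eta_1,M}$, and $\P D_{\eta_1,W}$. Integrating $D_{\P,Z}$ against $\P(S=0,a',z,w)=\ps\,\g(a'\mid w)\,\q(z\mid a',w)\,\p(w\mid 0)$ and noting that $\int\uu_1(z,a',w)\q(z\mid a',w)\dd\nu(z)$ is exactly the conditional mean inside the bracket gives $\int\frac{\g(a')}{\g_1(a')}\int\uu_1(z,a',w)\{\q-\q_1\}\dd\nu(z)\,\dd\P_W$. Integrating $D_{\P,M}$ against $\P(S=0,a^\star,m,w)=\ps\,\g(a^\star\mid w)\,\p(m\mid a^\star,w)\,\p(w\mid 0)$ gives $\int\frac{\g(a^\star)}{\g_1(a^\star)}\{\int\m_1(a',z,m,w)\q_1(z\mid a',w)\,\p(m\mid a^\star,w)\dd\nu(z,m)-\vv_1(a^\star,w)\}\dd\P_W$; here I would split $\q_1=\q+(\q_1-\q)$ and $\m_1=\m+(\m_1-\m)$. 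The doubly-true piece equals $\vv(a^\star,w)$ by \eqref{eq:defhuv}, and combining it with the $\vv$-type remainder from the $D_{\P,Y}$ step, the $-\vv_1$ term here, and $\P D_{\eta_1,W}=\int\{\vv_1-\vv\}\dd\P_W$ produces the first line $\int\{\vv_1-\vv\}(1-\g(a^\star)/\g_1(a^\star))\dd\P_W$. The $(\m_1-\m)$ piece, after replacing $\p(m\mid a^\star,w)$ via \eqref{eq:bayes}, is the fourth line of the claim. The residual $\m\,(\q_1-\q)$ piece, using $\int\m(a',z,m,w)\p(m\mid a^\star,w)\dd\nu(m)=\uu(z,a',w)$ and then reconciling with the $D_{\P,Z}$ contribution through the reparameterizations of $\uu$ and $\h$ — which is exactly what turns the $\g_1(a')$ in the denominator into $\g_1(a^\star)$ — gives the third line $\int\frac{\g(a')}{\g_1(a^\star)}\{\uu-\uu_1\}\{\q_1-\q\}\,\dd\P_W\dd\nu(m,z)$. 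Summing the four groups yields the claimed identity.

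The main obstacle is organizational rather than conceptual: each surviving term must be forced into a product of exactly two nuisance discrepancies, which demands repeated, careful use of the Bayes factorization \eqref{eq:bayes}, the reparameterizations of Lemma~\ref{lemma:aeif}, and the defining integrals \eqref{eq:defhuv} of $\uu$ and $\vv$, all while tracking whether each density belongs to the source ($S=1$) or the target ($S=0$) population and whether a given $\g$ or $\e$ is evaluated at $a'$ or at $a^\star$. Getting the $\g$-arguments exactly right — which amounts to correctly propagating the $\g_1$-ratios that $\uu_1$ and $\vv_1$ inherit from the reparameterized $\h_1$ — is what yields the precise weights $\g(a')/\g_1(a^\star)$ and $1-\g(a^\star)/\g_1(a^\star)$, and hence the product-of-differences structure from which, by Cauchy--Schwarz and positivity, the bound $R(\eta_1,\eta)$ follows.
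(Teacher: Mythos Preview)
Your approach is essentially the paper's: compute $\P D_{\eta_1}$ by taking the expectation of each summand $D_Y,D_Z,D_M,D_W$ via iterated conditioning, then repeatedly apply the Bayes identity \eqref{eq:bayes} together with the defining integrals of $\uu$ and $\vv$ and add--subtract tricks to force every surviving piece into a product of two nuisance discrepancies. The paper organizes the bookkeeping a little differently---it first labels the five raw pieces and then regroups them as $(\ref{eq:t1})+(\ref{eq:t4})+(\ref{eq:t5})$ and $(\ref{eq:t2})+(\ref{eq:t3})-(\ref{eq:t9})$---but the key algebraic moves (your add--subtract of the true weight in $D_Y$, your $\q_1=\q+(\q_1-\q)$ and $\m_1=\m+(\m_1-\m)$ splits in $D_M$) correspond one-for-one to the paper's derivation of \eqref{eq:t12} and \eqref{eq:t10}--\eqref{eq:t11}.
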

\begin{proof}
  For fixed $a'$ and $a^{\star}$ we have
  \begin{align}
    \P & D_{\eta_1} = \notag\\
       & \int \frac{\g(a')}{\g_1(a^{\star})}\frac{1-\bb_1(a',z,m)}{\bb_1(a',z,m)}\frac{\bb(a',z,m)}{1-\bb(a',z,m)}\frac{\q_1(z\mid a')}{\rr_1(z\mid
         a',m)}\frac{\e_1(a^{\star}\mid m)}{\e_1(a'\mid m)}\times\notag\\
       &\times\{\m(a',z,m) -
         \m_1(a',z,m)\}\p(m\mid z, a')\q(z\mid a')
         \dd\P_W\dd\nu(m,z)\label{eq:t1}\\
    +&\int\frac{\g(a')}{\g_1(a^{\star})}\uu_1(z,a')\{\q(z\mid a') -
       \q_1(z\mid a')\}\dd\P_W\dd\nu(m,z)\label{eq:t2}\\
    +& \int \frac{\g(a^{\star})}{\g_1(a^{\star})}\m_1(a',z,m)
       \q_1(z\mid a')\p(m\mid a^{\star})\dd\P_W\dd\nu(m,z)\label{eq:t3}\\
    -&\int \vv_1(a^{\star})\left\{\frac{\g(a^{\star})}
       {\g_1(a^{\star})}-1\right\}\dd\P_W\dd\nu(m,z)\label{eq:t4}\\
    -&\int \vv(a^{\star})\dd\P_W\dd\nu(m,z).\label{eq:t5}
  \end{align}
  First, note that
  \begin{align}
    (\ref{eq:t4}) + (\ref{eq:t5}) &=\int \{\vv_1(a^{\star})-
    \vv(a^{\star})\}\left\{1-\frac{\g(a^{\star})}{\g_1(a^{\star})}\right\}
    \dd\P_W\dd\nu(m,z)\label{eq:t6}\\
    & - \int \vv(a^{\star})\frac{\g(a^{\star})}{\g_1(a^{\star})}
    \dd\P_W\dd\nu(m,z)\notag.
  \end{align}
  Using (\ref{eq:bayes}) above, we get
  \begin{align}
    \int\vv(a^{\star})&\frac{\g(a^{\star})}{\g_1(a^{\star})}\dd\P_W\\
            =\notag\\
            & \int\frac{\g(a')}{\g_1(a^{\star})}\frac{\q(z\mid a')}{\rr(z\mid
             a',m)}\frac{\e(a^{\star}\mid m)}{\e(a'\mid m)}\{\m(a',z,m)-
             \m_1(a',z,m)\}\p(m\mid z, a')\q(z\mid a')\dd\P_W\dd\nu(m,z)+
             \notag\\
            &\int\frac{\g(a')}{\g_1(a^{\star})}\frac{\q(z\mid a')}{\rr(z\mid
            a',m)}\frac{\e(a^{\star}\mid m)}{\e(a'\mid m)}\m_1(a',z,m)\p(m\mid
            z, a')\q(z\mid a')\dd\P_W\dd\nu(m,z)\label{eq:t9}
  \end{align}
  Thus
  \begin{align}
    (\ref{eq:t1})  + &(\ref{eq:t4}) + (\ref{eq:t5}) = (\ref{eq:t6})\notag\\
                                                   &+ \int \frac{\g(a')}{\g_1(a^{\star})}\left\{\frac{1-\bb_1(a',z,m)}{\bb_1(a',z,m)}\frac{\bb(a',z,m)}{1-\bb(a',z,m)}\frac{\q_1(z\mid a')}
      {\rr_1(z\mid a',m)}\frac{\e_1(a^{\star} \mid m)}{\e_1(a'\mid m)} -
      \frac{\q(z\mid a')}{\rr(z\mid a',m)}\frac{\e(a^{\star}\mid m)}{\e(a'\mid
      m)}\right\}\times\notag\\ &\,\,\,\,\,\,\,\,\{\m(a',z,m) -
      \m_1(a',z,m)\}\p(m\mid z, a')\q(z\mid a')
      \dd\P_W\dd\nu(m,z)\label{eq:t12}\\ &-(\ref{eq:t9})\notag.
  \end{align}
  Substituting (\ref{eq:bayes}) into
  (\ref{eq:t3}) we get
  \[(\ref{eq:t3})-(\ref{eq:t9})=\int\frac{\g(a')}{\g_1(a^{\star})}\frac{\q(z\mid
      a')}{\rr(z\mid a',m)}\frac{\e(a^{\star}\mid m)}{\e(a'\mid
      m)}\m_1(a',z,m)\p(m\mid z, a')\{\q_1(z\mid a')-\q(z\mid
    a')\}\dd\P_W\dd\nu(m,z),\]
  which yields
  {\small
  \begin{align}
    (\ref{eq:t2})&+(\ref{eq:t3})-(\ref{eq:t9})\notag\\
                 &=\int\frac{\g(a')}{\g_1(a^{\star})}\left\{\frac{\q(z\mid
                   a')}{\rr(z\mid a',m)}\frac{\e(a^{\star}\mid m)}{\e(a'\mid
                   m)}\m_1(a',z,m)\p(m\mid z, a') -
                   \uu_1(z,a')\right\}\{\q_1(z\mid a')-\q(z\mid
                   a')\}\dd\P_W\dd\nu(m,z)\notag\\
                   &=\int\frac{\g(a')}{\g_1(a^{\star})}\left\{\uu(z,a') -
                  \uu_1(z,a')\right\}\{\q_1(z\mid a')-\q(z\mid
                   a')\}\dd\P_W\dd\nu(m,z)\label{eq:t10}\\
                  &+\int\frac{\g(a')}{\g_1(a^{\star})}\frac{\q(z\mid
                   a')}{\rr(z\mid a',m)}\frac{\e(a^{\star}\mid m)}{\e(a'\mid
                   m)}\left\{\m_1(a',z,m) - \m(a',z,m)\right\}\{\q_1(z\mid a')-\q(z\mid
                   a')\}\p(m\mid z, a')\dd\P_W\dd\nu(m,z)\label{eq:t11}
  \end{align}}
  Putting everything together yields
  \[(\ref{eq:t1}) +(\ref{eq:t2}) + (\ref{eq:t3}) + (\ref{eq:t4}) +
    (\ref{eq:t5})=(\ref{eq:t6}) +(\ref{eq:t12})+(\ref{eq:t10})
    +(\ref{eq:t11}),\]
  yielding the result of the theorem.
\end{proof}

\end{document}